\newtheorem{theorem}{Theorem}
\newtheorem{lemma}[theorem]{Lemma}
\newenvironment{proof}[1][Proof]{\noindent\textbf{#1.} }{\ \rule{0.5em}{0.5em}}
\def\Xint#1{\mathchoice
{\XXint\displaystyle\textstyle{#1}}%
{\XXint\textstyle\scriptstyle{#1}}%
{\XXint\scriptstyle\scriptscriptstyle{#1}}%
{\XXint\scriptscriptstyle\scriptscriptstyle{#1}}%
\!\!\int}
\def\XXint#1#2#3{{\setbox0=\hbox{$#1{#2#3}{\int}$ }
\vcenter{\hbox{$#2#3$ }}\kern-.5\wd0}}
\def\dashint{\Xint-}
\def\calc{\mathcal{C}}
\def\cald{\mathcal{D}}
\def\calf{\mathcal{F}}
\def\calg{\mathcal{G}}
\def\calh{\mathcal{H}}
\def\calj{\mathcal{J}}
\def\calk{\mathcal{K}}
\def\calt{\mathcal{T}}
\def\calw{\mathcal{W}}
\def\calz{\mathcal{Z}}
\def\C{\mathbb{C}}
\def\R{\mathbb{R}}
\def\C{\mathbb{C}}
\def\bq{\begin{equation}}
\def\eq{\end{equation}}
\def\bqy{\begin{eqnarray}}
\def\eqy{\end{eqnarray}}
\def\bal#1\eal{\begin{align}#1\end{align}}
\def\al{\alpha}
\def\be{\beta}
\def\de{\delta}
\def\ep{\epsilon}
\def\et{\eta}
\def\ga{\gamma}
\def\Ga{\Gamma}
\def\ka{\kappa}
\def\la{\lambda}
\def\La{\Lambda}
\def\om{\omega}
\def\si{\sigma}
\def\Si{\Sigma}
\def\bfd{\mathbf{d}}
\def\bfM{\mathbf{M}}
\def\bfL{\mathbf{L}}
\def\bfx{\mathbf{x}}
\def\bfR{\mathbf{R}}
\def\bfW{\mathbf{W}}
\def\bfX{\mathbf{X}}
\def\bfY{\mathbf{Y}}
\def\bfZ{\mathbf{Z}}
\def\bfv{\mathbf{v}}
\def\p{\partial}
\def\andq{\quad\mathrm{and}\quad}
\def\ncr{\nonumber\\}
\def\nn{\nonumber}
\def\qqquad{\quad\quad\quad}
\newcommand{\KN}{\mathbin{\bigcirc\mspace{-15mu}\wedge\mspace{3mu}}}
\begin{document}

\title{An inclusive curvature-like framework for describing dissipation: \\
 metriplectic 4-bracket dynamics}
\author{Philip J.~Morrison}
\email{morrison@physics.utexas.edu}
\author{Michael H.~Updike}
\email{michaelupdike@utexas.edu}
\affiliation{Department of Physics and Institute for Fusion Studies, 
The University of Texas at Austin, Austin, TX, 78712, USA}
\date{\today}

\begin{abstract}

\bigskip

An inclusive framework for joined Hamiltonian and dissipative dynamical systems, which preserve energy and produce entropy,  is given.  The dissipative dynamics of the framework is based on the {\it metriplectic 4-bracket}, a quantity like the Poisson bracket defined on phase space functions, but unlike the Poisson bracket has four slots with symmetries and properties motivated by Riemannian curvature.  Metriplectic 4-bracket dynamics is generating using two generators, the Hamiltonian and the entropy, with the   entropy being a Casimir of the Hamiltonian  part of the system.  The formalism includes all known previous binary bracket theories for dissipation or relaxation as special cases.  Rich geometrical significance of the  formalism and methods for constructing metriplectic 4-brackets are explored.  Many  examples of both finite and infinite dimensions are given.

\end{abstract}

\maketitle

\begin{widetext}

\tableofcontents

\end{widetext}

\section{Introduction}

\subsection{Description}

Various proposals have been suggested for categorizing or placing into a general formalism dissipative effects added to Lagrangian or Hamiltonian dynamical systems.  An early example is the formalism of  Rayleigh \cite{rayleigh77} who proposed a generalization of Lagrangian mechanics, by adding a term, the Rayleigh dissipation function, to Lagrange's equations of motion. However, here we follow on  the early 1980's formalisms based on adding to generalizations of the Poisson bracket (see e.g.\  \cite{pjm82,pjm98}) a  bilinear bracket which, akin to the Poisson bracket, is defined on  phase space functions.  These bracket  formalisms  were proposed in \cite{pjmK82,pjmH84,kauf84,pjm84,pjm84b,pjm86} for describing dynamics with dissipation in finite-dimensional systems,  fluid mechanics,  plasma models, and kinetic theories.  In this paper  we present an encompassing geometric formulation in terms of a quantity called the metriplectic   4-bracket,  which like the Poisson bracket is defined on phase space functions, but  has properties motivated by the  Riemann curvature  tensor, which subsumes ideas from the above publications as well as the the double bracket formalism of \cite{ brockett,vallis,carn90,shep90,pjmF11}.  

A variety of metriplectic 4-bracket examples of both finite and infinite dimension will be described here, the former by reduction  or mechanical  modeling  while  the latter come from fluid mechanics, plasma dynamics, and kinetic theory. However, we note,  there are gradient-like systems for describing dissipative dynamics in other  areas,  such as the Cahn-Hilliard equation \cite{ch58} that describes phase separation for binary fluids,  the gradient structure of porous medium dynamics \cite{otto01}, and even the Ricci Flows \cite{ham82,ch58} instrumental in the proof  of the Poincar\`e conjecture on $S^3$ \cite{perelman02}.  We have  found that the metriplectic 4-bracket formalism encompasses all  cases we have examined, but additional examples along these lines are reserved for future  publications.   Indeed, the metriplectic 4-bracket  formalism is most inclusive.

Our starting point is to consider  finite-dimensional systems with a  phase space $\calz$ being an $n$-dimensional manifold, on which a multibracket of the form $(f,g,h, \dots)$ is defined on smooth real-valued functions $f,g,h,\dots\in C^{\infty}(\calz)$, so that we have  
\[
(\, \cdot\, ,\, \cdot\,,\,\cdot\,,\, \cdot\,  \dots)\colon C^{\infty}\!(\calz)\times C^{\infty}\!(\calz)\times C^{\infty}\!(\calz)\times\dots\rightarrow C^{\infty}\!(\calz)\,.
\]
Examples of multilinear brackets of this form include Albeggiani's Poisson bracket of the $nth$ order (see \cite{whittaker37} p.~337), the well-known and oft cited Nambu bracket \cite{nambu73} (which was predated by Albeggiani), and the Lie algebra generalization of Nambu given in \cite{pjmB91}.  An $n$-bracket of this type generates nondissipative dynamics upon specification of $n-1$ ``Hamiltonians" as follows:
\[
\dot{o} =(o,H_1,H_2, \dots, H_{n-1})\,,
\]
where $H_1,H_2, \dots, H_{n-1}$ are the Hamiltonians and the observable $o$ is any dynamical variable.  Usual Poisson brackets, canonical or noncanonical, correspond to the case of a bilinear  bracket with a single Hamiltonian. 

For infinite-dimensional systems, field theories, multibrackets are  defined on functionals, maps of real-valued functions defined on some function space,
that contains the dynamical variables  defined on $\cald$,  the continuum label space of the field theory.  Here, usually out of necessity, since the main systems of interest are nonlinear partial or partial-integro differential equations, we operate on a formal level and assume whatever is necessary for our 
operations to exist. 

While the finite- and infinite-dimensional  multibrackets described above generate nondissipative dynamics,  the purpose of metriplectic dynamics, as developed in \cite{pjm84,pjm84b,pjm86,pjm09a,Materassi12,pjmBR13,pjmM18,pjmC20}, is to place the first and second laws  of thermodynamics into  a dynamical systems setting using  both a noncanonical (degenerate) Poisson bracket  and a symmetric bilinear  dissipative bracket, which together  produce a combination of a Hamiltonian system with a gradient system associated with a degenerate metric-like  tensor,  providing a Lyapunov function by construction.   Consequently, the two main functions of thermodynamics, an energy  or Hamiltonian $H$ and an entropy $S$  play important roles in the dynamics.  It was for this reason that a parent  3-bracket $(f,g;h)$  was given  in \cite{pjm84}  that reduces to  the dissipative bilinear bracket of metriplectic dynamics, where the $H$ dependence was designed as a  projector in order to maintain energy conservation.  Additional projector examples were given in \cite{pjm86} and subsequent work and a general dissipative multibracket that can preserve additional variables were provided in \cite{pjmBR13}. 

Given that dissipation is governed by a kind of  gradient system with a kind of metric tensor,  the idea that a  curvature tensor like object could be associated with dissipation was put forth in \cite{pjm86}.  So, given a  curvature-like  tensor, with two important functions associated with the dynamics, viz.\  $H$ and $S$,  with the former conserved and the latter produced, we are led to the idea of a 4-bracket of the form $(f,k;g,h)$ with symmetries and properties consistent with those possessed  by a fully  contravariant  curvature 4-tensor.  It turns out that this is a general point of view that encompasses a wide variety of dissipative dynamics formalisms, including all the previous bilinear brackets  for dissipation. We note  that the  formalism called GENERIC (rooted in \cite{grm84} but developed in \cite{og97} and subsequent works)  does not fit directly into the framework given here.  This is because GENERIC (often used improperly to mean the prior metriplectic dynamics) is not bilinear and lacks the requisite symmetry in its binary dissipative bracket, a symmetry that would be induced, as we will see, by an  underlying metriplectic 4-bracket.   However, a procedure is given for turning GENERIC brackets into metriplectic brackets and thereby fitting them into the present theory. 

\subsection{Overview}

The paper is organized as follows.  Section \ref{sec:finite} is about finite-dimensional systems.  Here, in Secs.\ \ref{ssec:poisson} and \ref{ssec:MPdynamics},  we review Poisson bracket and metriplectic dynamics, respectively and  establish notation that is to be used.  Section \ref{ssec:mp4bktd} contains the main new formalism of the  paper, dissipative dynamics generated by a metriplectic 4-bracket. The basic  notion is described in Sec.\ \ref{sssec:curvature4bkt}, where the {\it minimal metriplectic} properties of the 4-bracket are introduced, and in Sec.\ \ref{ssec:curvature4bkt-metri}  it is shown how a metriplectic 4-bracket with these properties generates dissipative dynamics consistent with the laws of thermodynamics.  Here we see how the metriplectic 2-bracket of \cite{pjm84,pjm86} emerges from the formalism.  In Sec.\  \ref{sssec:Lie-M} we describe a  geometric setting for manifolds with both Poissonian and Riemannian structure, which  we call {\it Lie-metriplectic}  {manifolds}.  Section  \ref{ssec:tremoval}  shows that there is,  in a sense that we define, a unique torsion-free metriplectic 4-bracket.  

Section  \ref{ssec:special4bkt}  describes paths for constructing metriplectic 4-brackets.  In Sec.\ \ref{sssec:LCforms} we see how they emerge from  manifolds with Riemannian structure, affine or Levi-Civita, showing there is a large class of possibilities.  In Sec.\ \ref{ssec:KN} the Kulkarni-Nomizu product  is adapted for out purposes -- it provides a way for  building-in the  requisite symmetries given two symmetric bivector fields of one's choosing. In Sec.\ \ref{sssec:LAMbkts} we describe Lie algebra based metriplectic 4-brackets, a formalism akin to   the Lie-Poisson manifold construction, with a special pure Lie algebra  case based on the  Cartan-Killing metric. Section \ref{sssec:poisson-connect}  uses a Poisson bracket induced connection (see  \cite{holonomy}) to describe a class of metriplectic 4-brackets with rich geometry. 

In Sec.\  \ref{ssec:reductionsF} we show how the  metriplectic 4-bracket formalism subsumes previous  binary  brackets.   In Sec.\ \ref{sssec:KM} we see how it reduces to the Kaufman-Morrison bracket,  while in Sec.\ \ref{sssec:double} we explore how it  relates to the double bracket formalism.  Finally, in this subsection, in Sec.\ \ref{ssec:GM}, we examine how GENERIC  can be linearized  and symmetrized, and then emerging naturally from the  metriplectic 4-bracket formalism. 

Section  \ref{ssec:finiteEX} contains a collection of finite-dimensional examples, beginning in \ref{sssec:Frb}  with the ubiquitous free rigid body  followed  in Sec.\ \ref{sssec:KKe} by the Kida  vortex, another Lie algebra based example. We conclude this section with Sec. \ref{sssec:other3D}, where other examples are mentioned. 

Section \ref{sec:infinite} describes the  leap from finite to infinite dimensions.    In Sec.\ \ref{ssec:field4bkts} we review noncanonical Hamiltonian field theory,  various  metriplectic and double bracket field theories, and  present a general form for the field-theoretic metriplectic 4-bracket. Section  \ref{ssec:recuctionsI}, is the infinite-dimensional version of  Sec.\ \ref{ssec:reductionsF}, where we show how metriplectic 4-bracket field theory subsumes previous theories. 
  
Infinite-dimensional examples are given in Secs.\ \ref{ssec:fluid} and   \ref{ssec:KT}.  Here we see how efficient it can be to construct  metriplectic 4-bracket field theories. In Secs.\   \ref{sssec:1+1} \ref{sssec:2+1} and  \ref{sssec:3+1} various 1-, 2-, and 3-dimensional fluid and plasma-like theories are developed, including one where the fluid helicity plays the role of entropy.  In Secs.\ \ref{sssec:landauCO} and \ref{sssec:GBoltz} concern kinetic-like theories.  The former giving a generalization of the Landau collision operator, that has proven useful for computing equilibria, while the latter concerns finding the metriplectic 4-bracket from which the  Boltzmann bracket of \cite{grm84} emerges. 

Finally, in  Sec.\ \ref{sec:conclusion}  we conclude.  Here we briefly summarize some main points of the paper and discuss the usefulness of the metriplectic 4-bracket formalism,  We discuss  how it can be used to develop `honest' models and we speculate about its usefulness  for structure preserving computation.


\section{Finite-dimensional metriplectic dynamical systems and the metriplectic 4-bracket}
\label{sec:finite}

 We consider  dynamical systems  a real phase space manifold $\calz$ of dimension $N$.  In a coordinate patch we denote a point of 
 $\calz$ by $z=(z^1,z^2,\dots, z^N)$ with usual tensorial notation.  For example, given a vector field $\bfZ\in \mathfrak{X}(\calz)$, where $\mathfrak{X}(\calz)$ denotes differentiable vector fields on our phase space manifold $\calz$,  we have the dynamical system, a set of autonomous first order differential equations, given by
 \bq
 \dot{z}^i= Z^i(z)\,,\qquad i=1,2,\dots, N\,,
 \eq
 with, as usual, $\cdot$ denoting time differentiation. 
 Such dynamics will be generated by bracket operations
 \bq
  C^\infty(\calz) \times C^\infty(\calz) \times \dots \mapsto C^\infty(\calz)
  \eq
defined on smooth functions $C^\infty(\calz)$. 
It is conventional to call  $C^\infty(\calz)$ the space of 0-forms, $\La^0(\calz)$, and we will use these expressions interchangeably.  Here, all quantities are assumed to be real-valued,  although we note that extensions from $\R$ to $\C$ are possible and remain to be fully explored.

\subsection{Poisson dynamics}
\label{ssec:poisson}

A phase space with Poisson manifold structure uses noncanonical Poisson brackets to generate flows.  (See \cite{weinstein,weinsteinE} for  important seminal work and \cite{sudarshan,pjm98} for a physicists perspective.).  Such a  binary operation, 
\bq
\{\,\cdot\,,\cdot\,\}\colon  C^\infty(\calz) \times C^\infty(\calz) \rightarrow C^\infty(\calz)\,,
\eq
in addition to being bilinear,  satisfies the following for all $f,g,h\in C^\infty(\calz)$: 
\bqy
&\mathrm{antisymmetry}:\quad &\{f,g\}=-\{g,f\}
\\
&\mathrm{Jacobi\ identity}:\quad & \{\{f,g\},h\}+ \{\{g,h\},f\}\} \nonumber\\
&  &  \hspace{1cm}+  \{\{h,f\},g\}\} =0\,,
\eqy
which provide  a Lie algebra realization on  $C^\infty(\calz)$, and
\bq
\mathrm{derivation}:\quad \{fg,h\}= f\{g,h\} + \{g,h\}f\,,
\label{derivation}
\eq
where $fg$ denotes pointwise multiplication of functions in $C^\infty(\calz)$.  The Leibniz derivation property of \eqref{derivation} assures that $\bfZ_f:=\{f, \,\cdot\,\} \in \mathfrak{X}(\calz)$ is a kind of Hamiltonian vector field.

For a Hamiltonian $H\in C^\infty(\calz)$ the equations of motion in a coordinate patch take the following form in tensorial notation:
\bq
\dot z^i=\{z^i,H\}=J^{ij}\frac{\p H}{\p z^j}\,,\qquad i,j=1,2,\dots, N\,,
\eq
where 
\bq
\{f,g\}= \frac{\p f}{\p z^i}J^{ij}\frac{\p g}{\p z^j}
\label{PBcoords}
\eq
and repeated index notation is used in the second equality. We will call the bivector field $(J^{ij})$ the Poisson tensor. 
We can write \eqref{PBcoords} in a two ways, namely 
\[
\{f,g\}= J(\bfd f, \bfd g) = \langle \bfd f,J\bfd g\rangle \,,
\]
{where for $f,g\in \La^0(\calz)$, the exterior derivative gives $\bfd f,\bfd g\in \La^1(\calz)$, the space of 1-forms, and in the second equality we have the duality pairing $\langle \ , \ \rangle$  between one-forms and vectors, with 
$J$ considered as a bundle map $J\colon T^*\calz \rightarrow T\calz$ satisfying $J^*=-J$.}

On symplectic manifolds, a special case of Poisson manifolds,  $N=2M$ and we may choose coordinates such that the Poisson tensor has the canonical form
\bq
J_c= 
\begin{pmatrix}
O_M& I_M\\
-I_M & O_M
\end{pmatrix}\,.
\label{Jc}
\eq
The choice of these coordinates reflects the usual splitting of $z$ into the canonical coordinates $(q,p)$. 

Unlike the nondegenerate canonical Poisson bracket with Poisson tensor \eqref{Jc},  where $\{f,C\}=0\ \forall f\, \Leftrightarrow f = \mathrm{constant}$ (a real  number),   on Poisson manifolds $\{f,C\}=0\ \forall f$ {is} satisfied by nontrivial Casimir invariants. Generally speaking, the level sets of these quantities foliate the Poisson manifold and dynamics is confined to the leaf tagged by an initial condition for {\it any} Hamiltonian function.  Such degenerate brackets, with Poisson tensor fields $(J^{ij})\neq (J_c^{ij})$, were called  noncanonical Poisson brackets in \cite{pjmG80,pjm80}. Casimir invariants play a special role as candidates for entropy functions in both the metriplectic formalism of Sec.~\ref{ssec:MPdynamics} and the curvature 4-bracket dynamics of Sec.~\ref{sssec:curvature4bkt}.  

Lie-Poisson brackets are special kind of noncanonical Poisson bracket that are associated with any Lie algebra $\mathfrak{g}$.   The natural phase space is actually the dual $\mathfrak{g}^*$.  For $f,g\in C^\infty(\mathfrak{g}^*)$,   $z\in \mathfrak{g}^*$, and $\bfd f\in\mathfrak{g}$,  the bracket has the form
\bqy
\{f,g\}&=& \langle z,[ \bfd f, \bfd  g]\rangle
\nonumber\\
&=&  \frac{\p f}{\p z^i} \, c^{ij}_{\ \, k} \, z_k\frac{\p g}{\p z^j}\,,
\label{LPbkt}
\eqy
where $[\,,\, ]$ is the Lie-bracket of $\mathfrak{g}$,   $i,j,k= 1,2, \dots, \mathrm{dim}\,  (\mathfrak{g})$  $z^i$ are coordinates for $\mathfrak{g}^*$,  and where $c^{ij}_{\ \,k}$ are the structure constants of  $\mathfrak{g}$.

Just as these Lie-Poisson brackets are special Poisson brackets associated with Lie algebras, we will find in Sec.~\ref{sssec:LAMbkts} that there are special metriplectic 4-brackets associated with Lie algebras.

\subsection{Metriplectic dynamics}
\label{ssec:MPdynamics}

As noted above, metriplectic dynamics emerged in the 1980s from the work 
\cite{pjmK82}, the adjacent papers \cite{kauf84,pjm84}, and \cite{pjm84b}, with the full set of axioms first appearing in \cite{pjm84,pjm84b}.  {The name metriplectic was introduced in \cite{pjm86} with the three basic axioms of  \cite{pjm84,pjm84b}   (given below) restated and several examples provided.}  The review here is  adapted from the more recent publication \cite{pjmBR13}.  As we will see, metriplectic dynamics nicely places the  First and Second Laws of  Thermodynamics into  a dynamical systems setting.

As above, a  \textit{metriplectic system} consists of a phase space manifold $\calz$, a Poisson bundle map $J\colon T^*\calz \rightarrow T\calz$, a bundle map  $G\colon T^*\calz \rightarrow T\calz$, and two functions $H, S \in C^{\infty}(\calz)$ with $H$ being the Hamiltonian (energy) and $S$ being the entropy.   
The dynamics will be defined in terms of binary brackets on functions $f,g,h\in C^\infty(\calz)$, which we assume have the following properties:
\begin{itemize}
 \item[(i)] $(f,g)\coloneqq \left\langle \mathbf{d}f, G\mathbf{d}g \right\rangle$ is a positive semidefinite symmetric bracket, i.e., $(\cdot\,,\cdot)$ is bilinear and symmetric, so $G^\ast = G$,
and $(f,f) \geq 0\  \forall\,  f \in C ^{\infty}(\calz)$; in coordinates the symmetric bracket has the form
\bq
(f,g)=  \frac{\p f}{\p z^i}G^{ij}\frac{\p g}{\p z^j}\,,
\label{SBcoords}
\eq
and   $G^{ij}=G^{ji}, \ \forall \,  i,j=1,2,\dots N$; 
\item[(ii)] $\{S, f\} = 0$ and $(H, f)=0\ \forall \, f \in C^{\infty}(\calz) 
\Longleftrightarrow J \mathbf{d}S = G\mathbf{d}H =0$; in coordinates this expresses the null space conditions
\bq
J^{ij}\frac{\p S}{\p z^j}\equiv 0 \quad \mathrm{and}\quad G^{ij}\frac{\p H}{\p z^j}\equiv 0\,.
\eq
\end{itemize}
The \textit{metriplectic dynamics} of any observable (dynamical variable) $o$ is given in terms of the two brackets by
\bqy
\label{metriplectic_dynamics_function}
\dot o&=& \{o, H+S\} + (o, H+S)
\nonumber\\
& =& \{o, H\} + (o, S)\,, \quad  \forall\, o \in C^{\infty}(\calz)\,.
\eqy

{We note that by scaling $S$ or the symmetric bracket, instead of using the single generator $H+S$ the dynamics can be written in terms of  $\calf=H-\calt  S$, where $\calt$ can  be interpreted as a global constant temperature and, consequently, $\calf$ is a Helmholtz free energy \cite{pjmC20}.}

In terms of coordinates in tensorial notation we have the ordinary differential equations
\begin{equation}
\label{metriplectic_dynamics}
\dot{z}^i = J^{ij}\frac{\p H}{\p z^j} + G^{ij}\frac{\p S}{\p z^j}\,.
\end{equation}
Geometrically, the  vector field $\bfZ_H\coloneqq J \mathbf{d}H \in \mathfrak{X}(\calz)$ expresses the Hamiltonian part of these equations, while $\bfY_S\coloneqq G\mathbf{d}S \in \mathfrak{X}(\calz)$ gives the dissipative part of the full metriplectic dynamics  of  \eqref{metriplectic_dynamics_function} or  \eqref{metriplectic_dynamics}. 

The name  metriplectic, as first  given  in  \cite{pjm86}, was chosen because these systems blend dissipative  and  Hamiltonian dynamics.  The dissipative part,  being  generated by the  symmetric  bracket,   is  a  degenerate  gradient flow determined by a  metric-like tensor $G^{ij}$  accounting for the `metri' of  metriplectic.  We will see in  this paper that there can also  be an actual  Riemannian metric, say $g$.  To distinguish the two, we will  refer  to the tensor $G^{ij}$  as   the  $G$-metric (even though it is  degenerate). Because  dynamics  in a Poisson manifold is symplectic on a Casimir leaf, this motivated the `plectic' of metriplectic.
 
The definition of metriplectic systems was {designed}  to have  three immediate and important consequences:
\begin{itemize}
\item[(i)] \quad \textit{Energy conservation -- First Law}:
\begin{equation}
\label{energy_conservation}
\dot{H} = \{H, H\} + (H, S) \equiv 0.
\end{equation}
\item[(ii)] \quad\textit{Entropy production -- Second Law}:
\bal
\label{entropy_production}
\dot{S}&= \{S, H\} + (S, S) 
\ncr
&= (S, S) \geq 0\,,
\eal
where the second equality follows because the entropy is a Casimir.  Here, in  line with thermodynamics, we have entropy  production; however, reversing the sign of the entropy gives a decreasing quantity as is typical for Lyapunov functions. 
\item[(3)] \quad\textit{Maximum entropy principle yields 
equilibria}:
Suppose that a point $z_*$ has any neighborhood $U$ such that for every
 point $z \neq z_* \in U$ such that $H(z)=H(z^*)$,  $S(z) < S(z_*)$.  Then, by the second law, $z_*$ is necessarily an equilibrium of the metriplectic dynamics. This is akin to the free 
energy extremization of thermodynamics, as noted  in 
\cite{pjm84b,pjm86} where it was  suggested that one can build in degeneracies associated with 
Hamiltonian ``dynamical constraints.'' For example, a good collision operator should conserve mass and momentum, in addition to energy.  (See also \cite{pjmBR13,mielke}.)  We will see that similar degeneracies can be {naturally} built into our {metriplectic} 4-bracket. 
\end{itemize}

Proving conventional nondegenerate  gradient flows achieve equilibrium states  has a  large literature dating to  \cite{Lojasiewicz,Polyak}.   Some results for  the degenerate flows  generated by the metric 4-brackets of 
Sec.\ \ref{ssec:mp4bktd} of the present  paper  are apparent, but  the nature of the level sets of $H$ and $S$ complicate matters, with multiple possible basins of attraction etc. Consequently, we leave this for future publication. (Some results will be included in \cite{pjmBKM23}.)

Although not treated in detail here, conservation of other invariants in addition to the Hamiltonian may be of interest.   Suppose that $I \in C ^{\infty}(\calz)$ is a  quantity conserved by the Hamiltonian part of the metriplectic dynamics, i.e., 
$\{I,H\} = 0$. Then, on an integral curve of the metriplectic dynamics, we have
\bq
\dot{I}= \{I,H\} + (I, S)= (I, S)\,.
\eq
Thus, as pointed out in \cite{pjm86}, this immediately implies that a function that is simultaneously conserved by  both  the full metriplectic
dynamics and its Hamiltonian part, is necessarily conserved by  the dissipative part.
Physically, it may be desirable  for general metriplectic systems to conserve dynamical constraints, i.e., conserved quantitates  of its Hamiltonian part and the examples given in e.g.  \cite{pjm84,pjm84b,pjm86,pjmC20} satisfy this condition  and method  based on multilinear  brackets was given  in  \cite{pjmBR13}.

\subsection{The metriplectic 4-bracket and dynamics}
\label{ssec:mp4bktd}

\subsubsection{The metriplectic 4-bracket}
\label{sssec:curvature4bkt}

To motivate our metriplectic 4-bracket, we begin by supposing our phase space manifold   $\calz$ is a Riemannian manifold with a curvature tensor $R$, 
\bq 
 R \colon\mathfrak{X}(\calz)\times\mathfrak{X}(\calz) \times\mathfrak{X}(\calz)\rightarrow \mathfrak{X}(\calz)\,,
 \label{R1}
 \eq
 i.e, for  $\bfX,\bfY,\bfZ\in \mathfrak{X}(\calz)$,   $R(\bfX,\bfY)\bfZ \in \mathfrak{X}(\calz)$. 
 As usual, we may write the Riemann curvature tensor in coordinate form as

 $\mathbf{R}^i_{\ jk l}$, with $i,j,k,l= 1,2,\dots, \mathrm{dim}\left(\calz\right)$. Given a metric 
 \bq
 g:\mathfrak{X}(\calz)\times\mathfrak{X}(\calz) \rightarrow \La^0(\calz),
 \eq
 which has the usual covariant tensor expression $g_{ij}$, we can construct the totally covariant tensor
 \bq 
 R:\mathfrak{X}(\calz)\times\mathfrak{X}(\calz) \times\mathfrak{X}(\calz)\times \mathfrak{X}(\calz)\rightarrow \La^0(\calz)
  \label{R2}
  \eq
 defined by  
 \bq
 R(\bfX,\bfY,\bfZ,\bfW)=g(\bfR(\bfX,\bfY)\bfZ,\bfW)
 \label{convenR},
 \eq
or in index form $R_{ijkl}= g_{im}R^m_{\ jk l}$. 

The totally covariant Riemann tensor possess the following symmetries :
\bqy
R_{ijkl}&=& -R_{jikl}
\label{asym1}
\\
R_{ijkl}&=&-R_{ijlk} 
\label{asym2}
\\
R_{ijkl}&=&R_{klij}\,.
\label{asym3}
\eqy
\\
(iv) the cyclic or as it is sometimes called the algebraic or first Bianchi identity
\bq
R_{ijkl} + R_{iklj} +R_{iljk} =0
\label{cyclic}
\eq
(v) differential or second Bianchi identity, which has been called the Jacobi  identity
\bq
R_{ijkl\, ;m} + R_{ijlm \, ;k} + R_{ijmk \, ;l} \equiv 0\,.
\eq

{The symbol $R$ in \eqref{R1} and \eqref{R2} is used in different senses. In the remainder of this paper we will use $R$ in other senses as well, ones not even necessarily related to Riemannian curvature. We do this to avoid the proliferation of symbols and  trust the usage will be clear from context. }

Given the above background,  we follow a path analogous to that of Sec.~\ref{ssec:poisson} to motivate our metriplectic bracket. Suppose we are given a fully contravariant tensor 
 \bq
  R:\La^1(\calz)\times\La^1(\calz) \times \La^1(\calz)\times \La^1(\calz)\rightarrow \La^0(\calz)\,,
  \eq
 satisfying the same symmetries as the Riemann tensor. Such an object can be constructed, for example, by composing the fully covariant curvature tensor $R$ with any tangent-cotangent isomorphism, { i.e., raising and lowering operation,} the metric being one obvious choice.  
We have a natural bracket on functions $f,k,g$, and $n$  by 
  \bq
 (f,k;g,n):=R(\bfd f, \bfd k,\bfd g,\bfd n) \,,
 \label{curvatureBkt}
\eq
which is our metriplectic 4-bracket. Restricted to a coordinate neighborhood, the metriplectic 4-bracket can be expressed in index form as
\bq
(f,k;g,n)=R^{ijkl}(z) \frac{\p f}{\p z^i} \frac{\p k}{\p z^j} \frac{\p g}{\p z^k} \frac{\p n}{\p z^l}\,.
\label{4Bktcoords}
\eq
From the above construction leading to \eqref{curvatureBkt} or \eqref{4Bktcoords}, the following algebraic properties are  immediately evident:
\begin{quote}
(i) linearity in all arguments, e.g,
\bq
(f+h,k;g,n)=(h,k;g,n)+(h,k;g,n)
\label{linearity}
\eq
(ii) the algebraic identities/symmetries
\bqy
(f,k;g,n)&=&-(k,f;g,n)
\label{Basym1}
\\
(f,k;g,n)&=&-(f,k;n,g)
\label{Basym2}
\\
(f,k;g,n)&=&(g,n;f,k)
\label{Basym3}
\\
(f,k;g,n) &+& (f,g;n,k)+ (f,n;k,g)=0
\label{Basym4}
\eqy
(iii) derivation in all arguments, e.g., 
\bq
(fh,k;g,n)= f(h,k;g,n)+(f,k;g,n)h
\label{leibniz}
\eq
which is manifest when written in coordinates as in \eqref{4Bktcoords}. Here, as usual, $fh$ denotes pointwise multiplication. 
\end{quote}
Using the above definitions, we can define the contravariant analog of many constructions in standard Riemannian geometry. Of particular dynamical interest is the contravariant sectional curvature defined on 1-forms, say  $\si,\et \in  \La^1(\calz)$, by 
\bq
K(\si, \eta)  \coloneqq  R(\si,\eta, \si,\eta,) = (f,g,f,g) \,,
\label{secCurv}
\eq
where the second equality follows if  $\si=\bfd f$ and $\eta = \bfd g$.  Here we choose to forsake the conventional normalization of $|\si  \wedge \eta|$ for simplicity. Throughout this paper, we will assume that 
\bq
\label{pos-def}
K(\si, \eta) \geq 0
\end{equation}
{In \ref{ssec:KN} we will give a construction that ensures this positive semidefiniteness.

By the above construction, it is clear that a large class of metriplectic 4-brackets exist on Riemannian manifolds. In \ref{sssec:poisson-connect}, we will show that the addition of a Poisson structure leads naturally to such a bracket. For our purposes, it is often convenient to part ways with the underlying geometry requiring only that a metriplectic 4-brackets satisfy the algebraic properties of \eqref{Basym1}, \eqref{Basym2}, and \eqref{Basym3} as well as the linearity and Leibnitz properties. This is equivalent to saying that the metriplectic 4-bracket derives from a 4-tensor satisfying \eqref{asym1}, \eqref{asym2}, and \eqref{asym3}.  {In addition we will assume the positivity  condition of \eqref{pos-def} and }refer to 4-tensors and associated 4-brackets that have these {properties} as being  {\it minimal metriplectic}.   In Sec.\ \ref{ssec:tremoval} we will see that the cyclic identity of \eqref{cyclic} need not be further imposed to be assumed true. The differential Bianchi identity has ramifications akin to those of the  Jacboi identity of Hamiltonian dynamics. These will be elucidated in a future publication. 
 
\subsubsection{Dynamics generated  by the  metriplectic 4-bracket }
\label{ssec:curvature4bkt-metri}

From  the metriplectic 4-bracket of \eqref{curvatureBkt} we construct the symmetric, yet degenerate bracket:
\bq
(f,g)_H \coloneqq (f,H;g,H)=R^{ijkl}\frac{\p f}{\p z^i}\frac{\p H}{\p z^j}\frac{\p g}{\p z^k}\frac{\p H}{\p z^l}\,.
\label{fgH}
\eq 
Interchanging $f$ and $g$ amounts to interchanging $i$ and $k$, and because we have symmetrical contraction in $j$ and $l$, we get by \eqref{Basym3}
\bq
(f,g)_H= (g,f)_H\,.
\eq
Thus  the $G$-metric follows from \eqref{fgH}, viz.
\bq
G^{ik}=R^{ijkl}\frac{\p H}{\p z^j}\frac{\p H}{\p z^l}\,,
\eq
and with this bracket, the dissipative dynamics is generated by
\bq
\dot{z}^i= (z^i,H;S,H)= (z^i,S)_H=G^{ij}\frac{\p S}{\p z^j}\,,
\eq
where, for the full metriplectic dynamics, we would add the Poisson bracket contribution to the above.

Energy conservation comes automatically upon using \eqref{Basym1} and \eqref{Basym2}, i.e., 
\bq
(f,H)_H= (H,f)_H=0\qquad \forall\, f\,.
\eq
Then the entropy dynamics would be governed by 
\bq
\dot{S} =(S,S)_H = (S,H;S,H)\geq 0\,,
\label{entropy}
\eq
where the inequality comes from \eqref{pos-def} and assures entropy production.

Thus we see how  metriplectic 2-brackets  first given  in \cite{pjm84,pjm86} arise from metriplectic 4-brackets.

\subsubsection{Lie-metriplectic manifolds: a  metriplectic 4-bracket view}
\label{sssec:Lie-M}

Since the Poisson manifolds of metriplectic dynamics usually  arise from the standard  picture of  reduction \cite{MWreduction}, we give here some comments on this case. We will refer to manifolds of this type as {\it Lie-metriplectic} manifolds. 

Hamiltonian systems with a configuration  space being a Lie group $G$ can lead to a  reduced phase space $\mathcal{Z}=T^*G/G\cong \mathfrak{g}^*$ where  $\mathfrak{g}^*$ is the  dual of the Lie algebra $\mathfrak{g}$ of  $G$. To endow $G$ with a metriplectic structure, we need a metric $g$ (not to be confused with $g\in \La^0(\calz)$). While many such metrics may exists, left-invariant metrics (metrics constant with respect to the vector field module basis of left-invariant vector fields) are the only ones  that  respect the reduced phase space $\mathcal{Z}$. By left translation, these metrics are globally defined by their action at the identity of $G$. Given such a metric, we may consider the left-invariant curvature tensor $R_G$, which restricts to, and is in fact entirely encoded by, the constant tensor $R_{G}|_e$ on $\mathfrak{g}$.
Metrically raising the indices of $R_G$, the metriplectic 4-bracket on $\mathfrak{g}^*$  takes the form of \eqref{4Bktcoords} with $R^{ijkl} = R_{G}|_e(z^i,z^j,z^k,z^l)$ {and}  $z^i$ being the coordinates of  $\mathfrak{g}^*$. 

{Because cases like the above, where the metriplectic 4-tensor is independent of the coordinate $z$,  have special properties,  we   call these  {\it Lie-metriplectic} 4-brackets.}
For such brackets, the $z$-dependance of the associated metriplectic 2-bracket  is determined by $H$.  For example, when the Hamiltonian is quadratic, say $H = H_{ij}z^i z^j$,  the  metriplectic 2-bracket is a quadratic form.

\subsubsection{Torsion removal -- uniqueness of metriplectic 4-brackets}
\label{ssec:tremoval}

A minimal  metriplectic 4-tensor $A^{ijkl}$ obeying   \eqref{Basym1}, \eqref{Basym2}, and \eqref{Basym3}, but not the cyclic symmetry of \eqref{Basym4} is said to have torsion because this cyclic symmetry can be traced to the symmetry in two of the Christoffel  symbol indices (see Sec.~\ref{sssec:LCforms}) in Riemannian geometry.   
Tensors  that satisfy   \eqref{Basym1}, \eqref{Basym2},  \eqref{Basym3}, and  \eqref{Basym4} are often called algebraic curvature tensors.

Minimal metriplectic tensors  like $A^{ijkl}$ can have their torsion removed  (see e.g.\ \cite{lanczos})  by defining the antisymmetric tensor 
\bq
T^{ijkl} = \frac{1}{3}\left( A^{ijkl}+ A^{iklj}+ A^{iljk}\right) 
\label{Ttensor}
\eq
and using it  to construct 
\bq
R:= A - T\,, 
\label{amt}
\eq
which  does indeed satisfy the algebraic Bianchi identity of  \eqref{Basym4},  and thus is an algebraic curvature tensor.

Because   $T$ is totally antisymmetric,  for any choice of functions $f,g, H$, 
\bq
 R^{ijkl}\frac{\partial f}{\partial z^i}\frac{\partial H}{\partial z^j} \frac{\partial g}{\partial z^k} \frac{\partial H}{\partial z^l}  
 = A^{ijkl}\frac{\partial f}{\partial z^i}\frac{\partial H}{\partial z^j} \frac{\partial g}{\partial z^k} \frac{\partial H}{\partial z^l}\,,
\label{equivalence}
\eq
where $R$  is  given by \eqref{amt}.   Therefore, the metriplectic 2-bracket 
\bq
(f,g)_H=(f,H;g,H)
\eq
 does not see the torsion.  {Although the  metriplectic 2-bracket and the concomitant dynamics it generates do not see torsion, the geometrical structure of the manifold without torsion is decidedly different from that  with torsion. Consequently, the global understanding of the dynamics is facilitated by knowing the torsion can be removed.}

In light  of \eqref{equivalence} we  can use the procedure above  to define a chain of isomorphisms between unique zero torsion metriplectic 4-brackets and minimal   metriplectic tensors obeying \eqref{Basym1}, \eqref{Basym2},  and \eqref{Basym3}.  Suppose we are given a minimal  metriplectic 4-bracket, then, with $T$ according to \eqref{Ttensor}, 
$A+T$  defines the same metriplectic system as $A$. Hence, to define a better space of metriplectic 4-brackets we can use the  quotient (equivalence relation) 
\[
A \sim R \implies A = R + T \,.
\]
In the set of equivalent curvature tensors, there is  a unique  one that is  torsion free for defining a metriplectic 4-bracket.   To  see this  suppose $R$ and $A$ are torsion free algebraic curvature tensors that define the  same metriplectic 2-bracket,  
\[
R^{ijkl}\frac{\p H}{\p z^j}\frac{\p H}{\p z^l} = A^{ijkl}\frac{\p H}{\p z^j} \frac{\p H}{\p z^l} \,,
\]
for every choice of some function $H$. Let  $A=R+T$.  Since algebraic curvature tensors form a vector space, it follows that $T$ is an algebraic curvature tensor satisfying  
\[
T^{ijkl}     \frac{\p H}{\p z^j} \frac{\p H}{\p z^l} = 0 \,,
\]
for all functions $H$. Upon choosing  $H = z^r$,  it follows that 
\bq
T^{irkr} = 0 \,,
\label{rr}
\eq
where $r$ is arbitrary and not summed over. Upon choosing $H = z^r + z^s$  it follows that 
\bq
T^{irkr} +T^{isks} + T^{irks} + T^{iskr} = T^{irks} + T^{iskr} = 0\,,
\label{24asym}
\eq
where the first equality follows by using \eqref{rr} in the first and  second  terms.  
If $T$ did satisfy \eqref{Basym4}, then
\[
T^{ijkl} = -T^{iljk} - T^{iklj} = T^{ilkj} - T^{iklj} \,,
\]
while antisymmetry in the second and fourth slot, because of \eqref{24asym}, would further imply that 
\[
T^{ijkl} =  -T^{ijkl} + T^{ijlk} = -T^{ijkl} - T^{ijkl} \implies T = 0 \,.
\]
Thus {$R=A$} and we see why  the algebraic Bianchi identity of  \eqref{Basym4}  is important and desirable.  
It removes redundancy in the theory. Also, we note it allows $R$ to be written as a sum of the  Kulkarni-Nomizu products described in Sec.\ \ref{ssec:KN}, which we will  see  is a  quite useful tool.

\subsection{Special metriplectic 4-bracket constructions}
\label{ssec:special4bkt}

Consider now some natural 4-bracket constructions. 

\subsubsection{Affine and Levi-Civita forms}
\label{sssec:LCforms}

Given  any  affine manifold we can define the Riemann-Christoffel  curvature tensor
\bq
R^i_{\, jkl}=\Ga^{i}_{\, rk}\Ga^{r}_{\, jl} - \Ga^{i}_{\, rl} \Ga^{r}_{\, jk} 
+ \frac{\p \Ga^{i}_{\, jl}}{\p  z^k} -\frac{\p \Ga^{i}_{\, jk}}{\p z^l} 
\label{RCten}
\eq
and further if our  manifold is Riemannian we have  the usual Levi-Civita connection
\bq
\Ga^{l}_{\, jk}=\frac12 g^{lr} \left(\frac{\p g_{jk}}{\p z^r}  + \frac{\p g_{rj}}{\p z^k}  
 -\frac{\p g_{rk}}{\p z^j} 
\right)\,.
\label{christo}
\eq
Thus,  using the metric $g$ we can construct
\bq
R^{ijkl}=g^{jr}g^{ks}g^{lt}  R^i_{\, rst}
\label{Rcurv}
\eq
and hence obtain a metriplectic 4-bracket  of the form of \eqref{4Bktcoords}.  
This 4-bracket has  the  associate $G$-metric tensor
\bq
G^{ij}=R^{ikjl}\frac{\p H}{\p  z^k}\frac{\p H}{\p  z^l}\,.
\label{RCGtens}
\eq
Using \eqref{RCten}, \eqref{christo}, \eqref{Rcurv} we see that  the $G$-metric of \eqref{RCGtens} is trivially  zero for Euclidean space, but in general it is a complicate  expression in terms of the Riemannan metric $g$, designed to have   $\p H/\p z$ in its  kernel.

As explained in Sec.\ \ref{sssec:curvature4bkt}, this class of 4-brackets motivated  our  theory.  However, our metriplectic construction is based on the algebraic properties of the bracket of  \eqref{curvatureBkt}.  We point out that not all 4-brackets are based on such Riemann curvature tensors.    Below  we give some other constructions of metriplectic 4-brackets.

\subsubsection{Kulkarni-Nomizu construction}
\label{ssec:KN}

Curvature 4-brackets with the requisite symmetries can be easily constructed by making use of 
the Kulkarni-Nomizu (K-N) product  \cite{kulkarni,nomizu} (anticipated in \cite{lanczos}).   Consistent with the bracket formulation of Sec.~\ref{sssec:curvature4bkt} we deviate from convention for K-N products and work on the dual space.  Given two symmetric bivector fields, say $\sigma$ and $ \mu $, operating on 1-forms $\bfd f, \bfd k$ and $\bfd g,\bfd n$, the K-N product is defined by 
\bqy
 \sigma \KN \mu \,(\bfd f, \bfd k,\bfd g,\bfd n)&=&  \sigma(\bfd f,\bfd g) \, \mu (\bfd k,\bfd n) 
\nonumber\\
&-& \sigma(\bfd f,\bfd n) \, \mu (\bfd k,\bfd g)
 \nonumber\\
&+&  \mu (\bfd f,\bfd g)\,  \sigma(\bfd k,\bfd n)
 \nonumber \\
&-&  \mu (\bfd f,\bfd n)\,  \sigma(\bfd k,\bfd g) \,.
\label{KNfinite}
\eqy
Thus, we  may define a 4-bracket according to 
\bq
(f,k;g,n)= \sigma \KN  \mu (\bfd f, \bfd k,\bfd g,\bfd n)\,.
\eq
In coordinates this gives a 4-bracket of the form \eqref{4Bktcoords} with 
\bq
R^{ijkl}=\sigma^{ik}  \mu  ^{jl} -  \sigma^{il}  \mu ^{jk} +  \mu ^{ik} \sigma^{jl}  -  \mu ^{il} \sigma^{jk}\,.
\label{KN4bkt}
\eq
It is easy to show that such a bracket has all  of the algebraic  symmetries described  in Sec.\ \ref{sssec:curvature4bkt}.
{{In addition}, it can be shown using the Cauchy-Schwarz inequality that positivity of the sectional curvature  is satisfied,   if both $\si$ and $\mu$ are positive semidefinite. {Moreover, if one of $\si$ or $\mu$ is positive definite, thus defining an inner product, then the sectional curvature of \eqref{pos-def} satisfies $K(\si, \eta) \geq 0$ with equality if and only if  $\si \propto \eta$.} (See  \citep{fiedler03} for additional results along these lines,  including theorems about completeness of K-N types of bases.)
}  Thus, it is easy to build minimal metriplectic 4-brackets.

If one chooses both  $\sigma$ and $ \mu$  to  be proportional to the metric of a Riemannian manifold, then \eqref{KN4bkt}  reduces  to 
\bq
R^{ijkl}=K\left( g^{ik}g^{jl}- g^{il}g^{jk} \right) \,,
\label{gProj}
\eq 
which is  the curvature associated with a form of  metriplectic bracket first given in \cite{pjm09a} (cf.\  Eq.\ (38) {of that reference}).  In the case  were $g$ is Euclidean  this yields the metriplectic  4-bracket
\bq
(f,k;g,n) =K \left( \de^{ik}\de^{jl}- \de^{il}\de^{jk} \right)   \frac{\p f}{\p z^i}  \frac{\p k}{\p z^j}
\frac{\p g}{\p z^k}  \frac{\p n}{\p z^l}\,,
\label{so3}
\eq
whence $(f,H;g,H)$ produces the metriplectic bracket for the rigid body given in \cite{pjm86}.   A Riemannian manifold is called a {\it space form} \cite{goldberg}   if its sectional curvature is equal to a constant, say $K$. The above  are thus space form metriplectic  4-brackets. 

In the cases of \eqref{gProj} and \eqref{so3}, the metric tensor $G^{ij}$ of \eqref{metriplectic_dynamics} is simply the projector that  projects out $\p H/\p z^i$ using $g^{ij}$  and $\de^{ij}$, respectively.

\subsubsection{Lie algebra based metriplectic 4-brackets}
\label{sssec:LAMbkts}

Metriplectic brackets associated with Lie algebras were first investigated in \cite{pjm09a} and later in \cite{pjmBR13}.  Here we give two natural Lie algebra related constructions for metriplectic 4-brackets. 

First, given  any Lie algebra $\mathfrak{g}$ with structure constants $c^{ij}_{\, \ k}$ and  a symmetric semidefinite tensor $g^{rs}$ one can construct a 4-bracket based on these quantities as follows:
\bq
(f,k;g,n) = c^{ij}_{\, \ r} c^{kl}_{\, \ s} \, g^{rs} \frac{\p f}{\p z^i}\frac{\p k}{\p z^j}\frac{\p g}{\p z^k}\frac{\p n}{\p z^l}\,.
\label{LA4bkt}
\eq
It is easy to see that this  bracket is  minimal  metriplectic, obeying  \eqref{Basym1}, \eqref{Basym2}, and \eqref{Basym3}, but  it does not have the cyclic symmetry of \eqref{Basym4}.   However,  this symmetry can be  obtained, i.e.,  the  torsion removed,  by the procedure of Sec.\ \ref{ssec:tremoval}, where 
$A^{ijkl}=c^{ij}_{\, \ r} c^{kl}_{\, \ s} \, g^{rs}$. Using the  notation  of Sec.\ \ref{ssec:tremoval}, we have
\[
T^{ijkl} = \frac{1}{3} g^{rs} (c^{ik}_{\, \ r} c^{lj}_{\, \ s} + c^{il}_{\, \ r} c^{jk}_{\, \ s}  + c^{ij}_{\, \ r} c^{kl}_{\, \ s} ).
\]
Thus it follows that $A$ is equivalent to the following algebraic curvature tensor
\bq
 B^{ijkl} = \frac{g^{rs} }{3} (2 c^{ij}_{\, \ r} c^{kl}_{\, \ s}+c^{ik}_{\, \ r} c^{jl}_{\, \ s} - c^{il}_{\, \ r} c^{jk}_{\, \ s} ). 
\label{Btensor}
\eq
Furthermore, $A$ is also equivalent to the following minimal  metriplectic tensors:
\bal
A^{ijkl} &\sim \frac{g^{rs} }{2} (c^{ij}_{\, \ r} c^{kl}_{\, \ s}-c^{ik}_{\, \ r} c^{lj}_{\, \ s} - c^{il}_{\, \ r} c^{jk}_{\, \ s} )
\ncr
 &\sim -g^{rs} \left( c^{ik}_{\, \ r} c^{lj}_{\, \ s} + c^{il}_{\, \ r} c^{jk}_{\, \ s}  \right)
\eal
and so on.   

In this  construction care must be  taken in ensuring that the null space and signature of $g^{rs}$, so far only assumed symmetric,  does not lead to  undesirable effects, such as preventing the desired relaxation to equilibrium. The Euclidean metric $g^{rs}=\de^{rs}$ is the simplest choice that alleviates these problems, but any  metric  is a possibility.  

As a second case, a refinement of the first, suppose the tensor $g^{rs}$  is proportional to the  Cartan-Killing form, i.e.,  $g_{CK}^{rs}=\la c^{rm}_{\, \ n} c^{sn}_{\, \ m}$ for constant $\la$, as considered in \cite{pjm09a}.  Recall, for semisimple Lie algebras $g_{CK}$  has no kernel and thus {it possesses an inverse,}  and for compact  semisimple Lie algebras like $\mathfrak{so}(3)$ it  is  in addition definite. With the choice of $g_{CK}$ the bracket of \eqref{LA4bkt} is naturally associated to any Lie algebra with no additional structure needed, akin  to  the bracket given in \cite{pjm09a}.  

For  the Lie algebra $\mathfrak{so}(3)$, $g_{CK}^{rs}\sim \de^{rs}$,  and the bracket using  \eqref{Btensor} reduces to the rigid body bracket of \eqref{so3}. In general, we find the following upon inserting $g_{CK}$ into \eqref{LA4bkt}:
\bq
(f,k;g,n) = \la\,  c^{ij}_{\, \ r} c^{kl}_{\, \ s} \, c^{rm}_{\, \ n} c^{sn}_{\, \ m}\, \frac{\p f}{\p z^i}\frac{\p k}{\p z^j}\frac{\p g}{\p z^k}\frac{\p n}{\p z^l}\,.
\label{LA4bktCK}
\eq
Using the Jacobi identity on the terms $c^{ij}_{\, \ r} \, c^{rm}_{\, \ n}$ and $ c^{kl}_{\, \ s} \,  c^{sn}_{\, \ m}$, gives an expression that can be manipulated into a tensor of the form
\bq
 B_{CK}^{ijkl} = g^{rs}_{CK} \big(2 c^{ij}_{\, \ r} c^{kl}_{\, \ s}-c^{ik}_{\, \ r} c^{lj}_{\, \ s} - c^{il}_{\, \ r} c^{jk}_{\, \ s} \big)\,. 
\label{BtensorCK}
\eq
Thus, for this case, torsion is already removed.  Moreover, it can further be shown to take the form of \eqref{LA4bkt} with metric $g_{CK}$, as was the case for $\mathfrak{so}(3)$ resulting in \eqref{LA4bktCK}. This follows from the fact that $c^{ijk}$ is antisymmetric under any interchange of indices, i.e., 
\bq
g_{CK}^{rs} c_{\, \ s}^{ij} = c^{ijr} = -c^{irj} = -g_{CK}^{js}c^{ir}_{\, \ s}\,.
\label{CKid}
\eq
Using this and the Jacobi identity 
\[
c^{is}_{r} c^{jk}_s = c^{js}_{r} c^{ik}_s -c^{ks}_{r} c^{ij}_s
\]
the middle term of \eqref{Btensor} satisfies
\bq
g_{CK}^{rs} c^{ik}_r  c^{jl}_s= g_{CK}^{rs} c^{il}_r c^{jk}_{s}  + c^{ij}_r c^{kl}_{s}
\eq
and  consequently 
\bq
B^{ijkl} =g_{CK}^{rs}c^{ij}_r c^{kl}_s \,.
\eq
{Recall, we referred to  4-brackets defined by such $z$-independent 4-tensors  as    { Lie-metriplectic}.}

\subsubsection{On metriplectic geometry}
\label{sssec:poisson-connect}

Given the from of metriplectic dynamics of \eqref{metriplectic_dynamics} it is natural to explore  manifolds with both Poissonian and Riemannian structure.  Such manifolds are plentiful because a metric exists on any Poisson manifold (assuming Hausdorff, paracompactness). As a guiding principle of Poisson geometry, one often looks to generalize objects defined on the tangent bundle to the cotangent bundle. Following this principle, we introduce the cotangent analogs of connections and curvature. For the sake of brevity, we omit much of the motivation, mathematical detail, and theoretical importance of such objects. Instead, we introduce the ideas of contravariant connections and contravariant curvature {axiomatically \cite{koszul}} and refer the interested reader to  \cite{holonomy}, which served as a main motivation for us.  (See also \cite{alioune}.)

Let $\mathcal{Z}$ be a manifold with a Poisson bracket 
$\{\cdot, \cdot \}\colon C^\infty(\mathcal{Z}) \times  C^\infty(\mathcal{Z})  \to  C^\infty(\mathcal{Z})$. We will work locally on a coordinate patch, denoting the coordinate functions $z^i$ as  above. In order to define a curvature tensor on forms, it is desirable to extend the Poisson bracket to a Lie bracket on forms  $[\cdot,\cdot]^{J}\colon \La^1(\mathcal{Z}) \times \La^1(\mathcal{Z}) \to \La^1(\mathcal{Z})$,  which is done according to  the formula
\bq
[\mathbf{d}f, \mathbf{d} g]^J := \mathbf{d}\{f,g \} {=\mathbf{d}\big(J(\bfd f, \bfd g)\big)}  \,,
\eq
where we use the superscript $J$ to distinguish this bracket and $f,g$ are functions, i.e., 0-forms,  (see \cite{holonomy}  for an extension beyond exact forms).
Furthermore, the Poisson bracket provides the natural Poisson tensor  {of Sec.\ \ref{ssec:poisson}} defined by 
$J^{ij} = \{z^i,z^j \}$,  which  is useful in that it allows us to define a map 
$J \colon \La^1(\mathcal{Z}) \to \mathfrak{X}(\mathcal{Z})$ 
from   1-forms to vector fields by
\bq
(J \al)^j \coloneqq \alpha_i J^{ij}\,,
\eq
where $\alpha$ is a 1-form.

In conventional Riemannian geometry, one defines a covariant connection as a map between vector fields, $\nabla\colon \mathfrak{X} \times \mathfrak{X} \to \mathfrak{X}$,  satisfying some linearity properties. However, given the singular submanifold structures inherent to Poisson geometry, it is sometimes a matter of mathematical necessity to extend the notion of a connection to the cotangent bundle.  There, a contravariant connection is a map 
$D\colon  \La^1(\mathcal{Z}) \times \La^1(\mathcal{Z}) \to \La^1(\mathcal{Z}) $
 satisfying the same identities as the covariant connection. Namely, given $\alpha, \beta, \gamma \in \La^1(\mathcal{Z})$ and  $f\in \La^0(\mathcal{Z})$
\begin{align}
D_{\alpha + \beta} \gamma &= D_{\alpha} \gamma + D_{\beta} \gamma   \\ 
D_{f\alpha} \gamma &= fD_\alpha \gamma   \\
D_\alpha (\beta + \gamma) &= D_\alpha\beta + D_{\alpha} \gamma   \\
D_{\alpha}(f\gamma) &= f D_{\alpha}\gamma + J(\alpha)[f]\gamma
\label{KLeib}\,.   
\end{align}
In \eqref{KLeib},  {$J(\alpha)[f]=\alpha_iJ^{ij}\p f/\p z^j$ is a 0-form  that  replaces the  term  ${\bf{X}}(f)$ in  Koszul's  algebraic  Leibniz identity.

 Given a choice of contravariant connection, it is natural to define the contravariant curvature 
$R\colon  \La^1(\mathcal{Z}) \times \La^1(\mathcal{Z}) \times \La^1(\mathcal{Z}) \to \La^1(\mathcal{Z}) $ by  obvious analogy to standard Riemannian geometry
\bqy
R(\alpha, \beta) \gamma &=& D_\alpha D_\beta \gamma - D_\beta D_\alpha \gamma - D_{[\alpha, \beta]^J}\gamma
\nonumber\\
&=& (R(\alpha, \beta) \gamma)_l \, \mathbf{d}z^l \, .
\label{riemannJ}
\eqy
To get the coordinate form of the contravariant curvature we simply define
\bq
R^{ijk}_{\; \; \; \; \; l} := \left( R(\mathbf{d}z^i, \mathbf{d}z^i) \mathbf{d}z^k\right)_l\,.
\label{RJgtensor}
\eq
Of particular interest to us are metric connections. 

Given a metric $g$, there is a  Levi-Civita-like contravariant connection given by the formula 
\bqy
2g(D_{\alpha} \beta, \gamma) &=& J(\alpha)\big[g(\beta, \gamma)\big] - J(\gamma)\big[g(\alpha, \beta)\big] 
 \nonumber \\ 
 && + J(\beta)\big[g(\gamma, \alpha)\big] + g\big([\alpha, \beta]^J, \gamma\big) 
 \nonumber\\
 &&
 - g\big([\beta, \gamma]^J, \alpha\big) + g\big([\gamma, \alpha]^J, \beta\big)\,, 
\eqy
which  has the  coordinate form,
\bqy
&&2g^{r s}(D_{\mathbf{d}f} \mathbf{d}h)_s  =
 \\
&&\quad  \qquad
 \frac{\partial f}{\partial z^i} J^{ij}\frac{\partial}{\partial z^j}\left[g^{kr} \frac{\partial h}{\partial z^k} \right] + J^{i r }\frac{\partial}{\partial z^i}\left[g^{kl} \frac{\partial f}{\partial z^k} \frac{\partial h}{\partial z^l}\right] \nonumber \\  
&& \qquad 
+  \frac{\partial h}{\partial z^i} J^{ij}\frac{\partial}{\partial z^j}\left[g^{kr }  \frac{\partial f}{\partial z^k}\right]+ g^{kr}\frac{\partial}{\partial z^k}\left[ J^{ij}\frac{\partial f}{\partial z^i}\frac{\partial g}{\partial z^j}\right]  \nonumber  \\
&& \qquad 
 -  g^{kl}\frac{\partial}{\partial z^k}\left[ J^{ir}\frac{\partial h}{\partial z^i}\right]   \frac{\partial f}{\partial z^l} -  g^{kl}\frac{\partial}{\partial z^k}\left[ J^{jk }\frac{\partial f}{\partial z^j}\right]   \frac{\partial h}{\partial z^l}. \nonumber
\eqy
This formula is perhaps best understood as defining the contravariant Christoffel symbols 
\bqy
\Gamma^{ij}_{\, \ l} &\coloneqq&  (D_{\mathbf{d}z^i} \mathbf{d}z^j)_l
\nonumber\\
&=& \frac{1}{2} g_{kl}\left[ J^{is}\frac{\partial g^{jk}}{\partial z^s}- J^{ k s}\frac{\partial g^{ij}}{\partial z^s} + J^{js}\frac{\partial g^{ik}}{\partial z^s}\right] \nonumber \\ && +\frac{1}{2} g_{kl}\left[  g^{ks}\frac{\partial J^{ij}}{\partial z^s}   -  g^{si}\frac{\partial J^{jk}}{\partial z^s}  -  g^{sj}\frac{\partial J^{ik}}{\partial z^s}  \right].
\label{GaJ}
\eqy

Just like its covariant analog, the contravariant Levi-Civita connection is the unique connection that is both torsion-free, 
\begin{align}
D_\alpha \beta - D_\beta \alpha = [\alpha, \beta]^J \  \longleftrightarrow
\  \Gamma^{ij}_{\, \ k} - \Gamma^{ji}_{\, \ k} = \frac{\partial J^{ij}}{\partial z^k} \,,
\label{tfree}
\end{align}
and metric compatible  {(vanishing covariant derivative of the metric)}
\bqy
&& J (\alpha) [g(\beta, \gamma) ] = g(D_\alpha \beta, \gamma) + g(\beta, D_\alpha \gamma)
\nonumber\\
&&\qquad \longleftrightarrow \ J^{is}\frac{\partial g^{jk}}{\partial z^j}   = g^{ks}\Gamma^{ij}_{\, \ s}   + g^{js} \Gamma_{\, \ s}^{ik}\,.
 \label{compat}
\eqy
Explicitly, for  1-forms $\alpha = \alpha_i dz^i$ and $\beta=\beta_j dz^j$ we compute 
 \bal
 D_{\alpha}\beta &=\alpha_i D_{dz^i} [\beta_j dz^j]
 \label{covcon}
 \\
 &= \alpha_i \beta_j D_{dz^i} [ dz^j] + \alpha_i J(dz^i)[\beta_j] dz^j 
 \ncr
 &= dz^j \alpha_i J^{is}\frac{\partial \be_j}{\p z^s} + \alpha_i \beta_j \Gamma^{ij}_k dz^k\,.
 \nonumber
 \eal
 Thus, \eqref{riemannJ}, produces the tensor
 \bal R^{ijk}_{\; \; \; l}&=  \Gamma^{jk}_s \Gamma^{is}_{\, \ l}   - \Gamma^{ik}_{\, \ s} \Gamma^{js}_{\, \ l} 
 \ncr
 &\qquad  - \frac{\partial J^{ij}}{\p z^s} \Gamma^{sk}_{\,  \ l}   + J^{is} \frac{\partial \Gamma^{jk}_{\, \ l}}{\p z^s} -J^{js}\frac{\partial \Gamma^{ik}_{\, \ l}}{\p z^s}\,.
\label{curvJg}
 \eal
The addition of a metric structure allows us to raise the indices and obtain the fully  contravariant 4-tensor from \eqref{curvJg} according to
\bq
R^{ijkl}\coloneqq R^{ijk}_{\; \; \; \; \; s}\,  g^{sl}\,.
\eq
Provided we use the  contravariant connection described  above in \eqref{riemannJ}, $R^{ijkl}$ obeys all the symmetries of the normal Riemann tensor, including the first and second Bianchi identities. Further, by raising an index, we have a map $R: \La^1(\mathcal{Z}) \times \La^1(\mathcal{Z}) \times \La^1(\mathcal{Z}) \times \La^1(\mathcal{Z}) \to \La^0(\calz)$ from forms to scalars. This induces a  4-bracket on functions $( \cdot,\cdot\,; \, \cdot, \cdot):  C^\infty(\mathcal{Z}) \times C^\infty(\mathcal{Z}) \times C^\infty(\mathcal{Z}) \times C^\infty(\mathcal{Z}) \rightarrow C^\infty(\mathcal{Z})$ under the association of a function with its differential
\bqy
(f,g; k, n)  &:=&  R(\mathbf{d}f,\mathbf{d}g,\mathbf{d}k, \mathbf{d}n)
\nonumber\\
&=&   R^{ijkl}\frac{\partial f}{\partial z^i}\frac{\partial k}{\partial z^j}\frac{\partial g}{\partial z^k}\frac{\partial n}{\partial z^l} \,.\nonumber
\eqy

We note that the contravariant connections can behave quite differently from what  one might expect in  Riemannian geometry. For example, consider the Poisson manifold $\mathcal{Z} = \mathfrak{so}(3)$ with the standard Poisson bracket $\{z^i, z^j\} = -\epsilon^{ijk}z^k$. Even with a seemingly flat metric $g^{ij} = \delta^{ij}$, we have the nontrivial Christoffel symbol and nontrivial curvature tensors
\bq
\Gamma^{ij}_{\, \ k} = -\frac{1}{2}\epsilon^{ijk}
\quad\mathrm{and}\quad
R^{ijkl} = \frac{1}{4}(\delta^{ik}\delta^{jl}-\delta^{il}\delta^{jk} )\,.
\label{nontriv}
\eq
Thus, we see again  the  emergence of the metriplectic 4-bracket whose associated 2-bracket was given  \cite{pjm86} and later used in \cite{pjmM18} to model a controlling torque for the free rigid body.

As an aside, we note the structure outlined above can be a powerful tool in the study of Poisson manifolds. For example, if the metric and Poisson structure are compatible  {(vanishing covariant derivative of $J$), i.e.\}  for all $\alpha, \beta , \gamma \in \La^1(\mathcal{Z})$
\bq
J(D_{\alpha}\beta , \gamma) + J(\beta, D_{\alpha} \gamma) = 0\,,
\label{Pcompat}
\eq
then the symplectic leaves of $\mathcal{Z}$ become K\"ahler manifolds (see, e.g.,  \cite{boucetta03}). 

While the metriplectic dynamics generated by compatible Poisson and Riemannian structures promises to be very theoretically interesting, such a condition is  too strong for our current purposes. For example, when $J$ is Lie-Poisson it is easy to verify that the corresponding Cartan-Killing metric is never compatible with $J$. 

Special cases of the metriplectic manifolds of this section come to mind:   Lie-Poisson manifolds with an unidentified metric tensor, Poisson manifolds with a constant metric tensor,  Lie-Poisson manifolds with a  constant Euclidean metric tensor or  the Cartan-Killing metric,   $g_{CK}$.  

The first case follows upon inserting the Lie-Poisson form for $J$ into \eqref{GaJ} and the result into \eqref{curvJg}.  This yields an interesting expression that we will not record here.  The connection for case of constant metric follows immediately from \eqref{GaJ},
viz.
\bq
\Gamma^{ij}_{\, \ l} = \frac{1}{2} g_{kl}\left[  g^{ks}\frac{\partial J^{ij}}{\partial z^s}   -  g^{si}\frac{\partial J^{jk}}{\partial z^s}  -  g^{sj}\frac{\partial J^{ik}}{\partial z^s}  \right].
\label{conGaJ}
\eq
If $J$ is Lie-Poisson this becomes 
\bq
\Gamma^{ij}_{\, \ l} = \frac{1}{2} g_{kl}\left[  g^{ks} c^{ij}_{\, \ s}    -  g^{si} c^{jk}_{\, \ s}  -  g^{sj}  c^{ik}_{\, \ s}  \right].
\label{conGaJLP}
\eq
and if the Euclidean metric $g^{rs}=\de^{rs}$ is inserted into \eqref{conGaJLP}, the following simplified connection is obtained:
\bq
 \Gamma^{ij}_{\, \ k} = \frac{1}{2}(c^{ij}_{\, \ k} - c^{jk}_{\, \ i} + c^{ki}_{\, \ j})\,,
\label{eucGALP}
\eq
(cf.\   \cite{milnor} where  a similar formula is found).  Note  here and henceforth index placement purity  is returned  by inserting appropriate factors  of the metric.  The   curvature tensor  following from \eqref{eucGALP} is
\bqy
R_{\; \; \; \; \; b}^{kij} &=& R_b (dx^i, dx^j)dx^k 
\\
&=& 
\frac{1}{4}(c^{jk}_a - c^{ka}_j + c^{aj}_k)(c^{ia}_b -  c^{ab}_i + c^{bi}_a) 
\nonumber\\
&&\qquad - \frac{1}{4}(c^{ik}_a - c^{ka}_i + c^{ai}_k)(c^{ja}_b - c^{ab}_j + c^{bj}_a) 
\nonumber\\
&&\qquad - \frac{1}{2}c^{ij}_a(c^{ak}_b - c^{kb}_a + c^{ba}_k)\,,
\nonumber
\eqy
where we would raise $b$ with $\de^{bl}$ to obtain the 4-tensor for the corresponding  metriplectic 4-bracket. Finally, if 
the  $g_{CK}$ metric, as  discussed in Sec.\ \ref{sssec:LAMbkts} is assumed, then a simple form for  the 4-tensor is obtained, 
\bq
R^{lkij} = \frac{1}{4}c^{jk}_ac^{ial} - \frac{1}{4}c^{ik}_ac^{jal} + \frac{1}{2}c^{ij}_ac^{kal}
\label{CKR}
\eq
which as  we have  noted reduces to 
\[
 R^{ijkl}= \frac{3}{4} c^{ij}_a \, c^{kl}_a.
 \]

For later use in Sec.\ \ref{sssec:double} we record some useful lemmas about Casimirs.
If $S$ is a Casimir and  {$f,g\in\La_0$ are arbitrary functions,} then 
\bal
D_{\bfd S} \bfd  f  &= D_{\bfd  S} \left[\frac{\p f}{\p z^j} \bfd z^j\right] 
\ncr
&= \bfd z^j J(\bfd S)\left[\frac{\p f}{\p z^j} \bfd z^j\right]  + \left[\frac{\p f}{\p z^j} \bfd z^j\right]  D_{\bfd S} \bfd z^j 
\ncr
&= \frac{\p S}{\p z^i} \frac{\p f}{\p z^j} 
D_{\bfd z^i} \bfd z^j = \frac{\p S}{\p z^i} \frac{\p f}{\p z^j} \Gamma^{ij}_{\, \ k} \bfd z^k. 
\eal
Thus, we have the symmetry 
\bq
D_{\bfd S}\bfd f = \bfd\{S,f\} + D_{\bfd f}\bfd S = D_{\bfd f} \bfd S\,,
\eq
where  $\{S,f\}=0$ because $S$ is a Casimir.   
 Furthermore, we have the antisymmetric bracket
\bal
g(D_{\bfd S} \bfd f, \bfd g) &= - g(\bfd f, D_{\bfd S} \bfd g) +J(\bfd S)[g(\bfd f,\bfd g)] 
\ncr
&= - g(\bfd f, D_{\bfd S} \bfd g)
\eal
and 
\bal
R(\bfd S,\bfd f)\bfd g &= D_{\bfd S} D_{\bfd f} \bfd g - D_{\bfd f} D_{\bfd S} \bfd g - D_{\bfd\{S,f\}} \bfd g 
\ncr
&= D_{\bfd S} D_{\bfd f} \bfd g - D_{\bfd f} D_{\bfd  S} \bfd g \,,
\eal
These properties make Casimirs very special function with respect to the contravariant Riemann tensor and hence the metriplecic 4-bracket.

\subsection{Relation to other dissipation bracket formalisms}
\label{ssec:reductionsF}

The metriplectic 4-bracket  provides a unifying picture, tying together other brackets for dissipation that have previously  appeared in the literature.

\subsubsection{Reduction to Kaufman-Morrison dynamics}
\label{sssec:KM}

In \cite{pjmK82} a bracket for describing the so-called quasilinear theory of plasma physics, a dissipative relaxation theory, was proposed.  The bracket of this theory, here called the KM bracket,  is bilinear and antisymmteric, consequently degenerate, with dynamics generated by a Hamiltonian, $H$.   We denote the bracket of this theory by $[\,\cdot\,  ,\, \cdot\, ]_S: \La^0(\calz)\times \La^0(\calz)\rightarrow  \La^0(\calz)$, where the subscript $S$ will become clear momentarily.   The KM bracket generates dynamics according to 
\bq
\dot{z}^i=[z^i,H]_S\,,
\eq
with the properties that $\dot{H}=0$ and $\dot{S}\geq 0$.  Energy conservation  follows from the antisymmetry of the KM bracket, while the entropy production was built into the theory. 

It is apparent that the KM bracket emerges naturally from any metriplectic 4-bracket  as follows:
\bal
[f,g]_S\coloneqq (f,g;S,H)&=\frac{\p f}{\p z^i}\frac{\p g}{\p z^j}\frac{\p S}{\p z^k}\frac{\p H}{\p z^l}R^{ijkl}
\ncr
&= J^{ij}_{KM} \frac{\p f}{\p z^i}\frac{\p g}{\p z^j}\,,
\eal
where $J^{ij}_{KM}$ is the antisymmetric bivector is given by 
\bq
J^{ij}_{KM} = \frac{\p S}{\p z^k}\frac{\p H}{\p z^l}R^{ijkl}\,.
\eq
Thus, clearly, we have the following:
\bq
[f,g]_S=-[g,f]_S
\eq
by \eqref{Basym1} or \eqref{asym1}, consequently, 
\bq
\dot{H}= [H,H]_S= (H,H;S,H)=0\,,
\eq
and 
\bq
\dot{S}=[S,H]_S=(S,H;S,H)\geq 0\,,
\eq
 by \eqref{entropy}, as was proposed in \cite{pjmK82}.

\subsubsection{Reduction to double bracket dynamics}
\label{sssec:double}

 Double brackets were proposed in \cite{vallis,brockett}  as a computational means of relaxing to equilibria by extremizing a Hamiltonian at fixed Casimirs by using the square of the Poisson tensor $J$ to generate dynamics. The formalism was improved in \cite{pjmF11} and subsequently used in a variety of magnetohydrodynamics contexts in \cite{pjmF17,pjmFWI18,pjmF22}. With the improvements given in \cite{pjmF11}  we can write this dynamics as follows:
 \bq
 \dot{z}^i= ((z^i,H))
 \eq
 where the double  bracket $(( \, .\,,\, .\,))\colon C^\infty(\mathcal{Z}) \times  C^\infty(\mathcal{Z})  \to  C^\infty(\mathcal{Z})$  has the coordinate represention 
 \bq
 ((f,g))= J^{ik} g_{kl}J^{jl}  \frac{\p f}{\p z^i}\frac{\p g}{\p z^j}\,.
 \label{DBg}
 \eq
For metric $g$, evidently
\bq
\dot{H}=((H,H)) \geq 0 \andq \dot{C}=0\,, 
\label{dbprops}
\eq
where $C$ is any Casimir of $J$. A commonly used case of  \eqref{DBg} is that  for Lie-Poisson systems,  where it takes the form
 \bq
 ((f,g))= c^{ik}_{\, \ r} c^{jl} _{\, \ s} \, g_{kl}\, z^s z^r \frac{\p f}{\p z^i}\frac{\p g}{\p z^j}\,.
 \label{LPDBg}
 \eq

One connection between double brackets and the metriplectic 4-bracket can be  made by simply interchanging the role of $H$ with a Casimir $S$, i.e., consider dynamics generated  by the symmetric bracket
\bq
(f,g)_S=(f,S;g,S)\,,
\eq
which for a Casimir $S$ will satisfy  the conditions of \eqref{dbprops}.   However, if $C$ is another Casimir, distinct from $S$, then there is no guarantee it is conserved. We will see in a moment  that the development of Sec.\ \ref{sssec:poisson-connect} provides a way to improved upon this. 

 A direct relationship between  Lie-Poisson double brackets of the form of \eqref{LPDBg} and  metriplectic 4-brackets  of the form of \eqref{LA4bkt} can be made for Cartan-Killing metrics by choosing the entropy
 \bq
 S_{LP}= \frac12 z^a \bar{g}_{ab} z^b
 \eq
 and inserting this into \eqref{LA4bkt}, yielding
 \bal
 (f,g)_{S_{LP}}&=(f,S_{LP};g,S_{LP})
 \ncr
 &=  c^{ij}_{\, \ r} c^{kl}_{\, \ s} \, g^{rs} \frac{\p f}{\p z^i}\frac{\p S_{LP}}{\p z^j}\frac{\p g}{\p z^k}\frac{\p S_{LP}}{\p z^l}
 \ncr
 &=c^{ij}_{\, \ r} c^{kl}_{\, \ s} \, g^{rs}\, \bar{g}_{ja} z^a\, \bar{g}_{lb} z^b\,  \frac{\p f}{\p z^i}\frac{\p g}{\p z^k} \,.
 \label{preJgJ}
 \eal
 Now if we suppose $g$ is the Cartan-Killing metric and $\bar{g}$ is its assumed inverse, \eqref{preJgJ}  becomes upon using   \eqref{CKid}
 \bal
   (f,g)_{S_{LP}}&=  c^{ij}_{\, \ r} c^{rk}_{\, \ s} \, g^{ls}\,  {g}_{ja} z^a\,  {g}_{lb} z^b\,  \frac{\p f}{\p z^i}\frac{\p g}{\p z^k} \,.
  \ncr
    &=  c^{ij}_{\, \ r}  c^{rk}_{\, \ s} \, g_{ja}\,   z^a   z^s\,  \frac{\p f}{\p z^i}\frac{\p g}{\p z^k} 
  \ncr
   &=  c^{ij}_{\, \ r}  J^{rk} \, g_{ja}\,   z^a      \frac{\p f}{\p z^i}\frac{\p g}{\p z^k} 
   \ncr
  &= J^{ir}g_{rs} J^{ks}  \frac{\p f}{\p z^i}\frac{\p g}{\p z^k} \,.
 \eal

Now let us see what transpires when we use the identities at the end of Sec.\ \ref{sssec:poisson-connect}.  If $S$ and $C$ are any Casimirs and our manifold has the metric and Poisson bracket as described, then we can compute as follows:
\bal
(S,C; S,C) &= g(dC, D_{dS} D_{dC}dS - D_{dC} D_{dS}dS)
\ncr
& = -g( D_{dS} dC, D_{dC} dS) + g(D_{dC}dC, D_{dS}dS) 
\ncr
&= -g( D_{dC} dS, D_{dC} dS) + g(D_{dC}dC, D_{dS}dS) 
\ncr
&= \left(\frac{\p C}{\p z^i} \frac{\p C}{\p z^j}\frac{\p S}{\p z^k} \frac{\p  S}{\p z^l} 
- \frac{\p C}{\p z^i} \frac{\p C}{\p z^k} \frac{\p S}{\p z^j} \frac{\p S}{\p z^l}\right) 
\ncr
&\hspace{3cm} \times \Gamma^{ij}_{\, \ a} \, g^{ab}\, \Gamma_{\, \ b}^{kl}\,, 
\eal
a perspicuous form. 

We now specialize to  Lie-Poisson systems with such  constant Cartan-Killing metrics, which gives a  prevalent case  {that we termed}  Lie-metriplect. 
For any Lie-metriplectic system, the Christoffel symbol takes the form of \eqref{conGaJLP}. 
Trivially, now, we see that when both $S$ and $C$ are Casimirs we are left with 
\bq
(S,C; S,C)= 0 \,.
\eq
Thus for { these}  Lie-metriplectic systems, double brackets emerge nicely from our metriplectic 4-bracket. 

\subsubsection{GENERIC is  Metriplectic}
\label{ssec:GM}

In this section we place the ideas given in  \cite{grm84} for a bracket for  the  Boltzmann  collision operator  into a finite-dimensional setting.  Since this work was the origin of a trail leading to what was later referred to As GENERIC, we call this bracket the GENERIC bracket.  We will show given assumptions how to transform the GENERIC bracket, which is not symmetric and not bilinear, into a metriplectic 2-bracket that has these properties. 

Apparently, the latest rendition of {GENERIC  \cite{grm18}} is  written in terms of a dissipation  potential  $\Xi(z,z_*)$, where the shorthand $z_{*i}= \p S/\p z^i$, for some entropy function $S$,    is used.  Using $\Xi(z,z_*)$ the components of the  dissipative vector field are generated via
\bq
Y_S^i=\left.\frac{\p \Xi(z,z_*)}{\p z_{*i}}\right|_{z_*=\p S/\p  z}\,.
\label{GVF}
\eq
Thus,  in the special  case where
\bq
\Xi(z,z_*)= \frac12 \, \frac{\p S}{\p z^i}\, G^{ij}(z) \, \frac{\p S}{\p z^j}\,, 
\eq
the dissipative  vector field is
\bq
Y_S^i=G^{ij}(z)\frac{\p S}{\p z^j}\,,
\eq
which is equivalent to that generated by a metriplectic 2-bracket as originally proposed in \cite{pjm84,pjm84b,pjm86}. Thus, it   has  been said that metriplectic dynamics is a special case of GENERIC. We will show that this is not the case by showing how vector fields of the form of \eqref{GVF} can be generated  by  a metriplectic 2-bracket. 
 
As before, we suppose our phase space is some finite-dimensional manifold $\mathcal{Z}$ on which lives a dynamical system, and suppose   smooth functions $f,g,H,S\in \La^0(\calz)$ with, as usual,  $H$ being the Hamiltonian and $S$ the entropy.  Then, in   coordinates, GENERIC dissipative dynamics is generated by 
\bq
(f,g))=\frac{\p  f}{\p z^i} Y_S^i(z,\p g/\p z)\,,
\label{nosym}
\eq
a bracket that is a linear derivation in the first slot but not in its second slot;  at this point  $Y_S^i$ is considered an arbitrary function of  its arguments.  Dissipative  dynamics is generated  with an entropy function as follows:
\bq
\dot{z}^i = (z^i,S))=  Y_S^i(z,\p S/\p z)
\eq
and energy conservation is assumed to be satisfied because 
\bq
\dot{H}= \frac{\p  H}{\p z^i} Y_S^i(z,\p S/\p z) = 0\,.
\label{dotHG}
\eq
 Finally, entropy production requires 
\bq
\dot{S}=(S,S))\geq 0\,,
\eq
a property built into $\bfY_s$ by $Y_S^i\, \p S/\p z^i\geq 0$. 

In the above we could identify
\bq
 Y_S^i(z,z_*)=\frac{\p \Xi(z,z_*)}{\p z_{*i}}\,,
 \eq
 but the linearization procedure does  not require the dissipative vector field to have  this  form in  terms  of a   dissipation  potential. 

Given that at the outset one has  in  mind a dynamical system with a particular entropy function  $S$,  one can turn  the bracket of \eqref{nosym} into a  bilinear form  by solving
\bq
\hat{G}^{ij}\,  \frac{\p S}{\p z^j}=Y_S^i(z,\p S/\p z)\,.
\label{identN}
\eq
If one  can solve for $\hat{G}$,  then the following  bracket, 
\bq
(f,g\rangle=\frac{\p f}{\p z^i} \, \hat{G}^{ij}\, \frac{\p g}{\p z^j}\,,
\label{Gblin}
\eq
will yield  the dissipative vector field of \eqref{identN}   in the form $Y_S^i=(z^i,S\rangle$. 
The bracket of \eqref{Gblin}  is clearly bilinear in  $f$  and $g$, but  symmetry  is not guaranteed. 

Since any  $\hat{G}^{ij}$ that satisfies \eqref{identN} will do, we  assume  the following direct product form:
\bq
\hat{G}^{ij}=Y_S^i(z,\p S/\p z)\, M^j(z,\p S/\p z) \,,
\label{NMassum}
\eq
which upon insertion   into  \eqref{identN}  yields
\bq
 Y_S^i\left( 1- M^j\frac{\p S}{\p z^j}\right)=0\,.
 \eq
Upon choosing
\bq
M^j={\de^{jk} \frac{\p S}{\p z_k}}\Big/{\frac{\p S}{\p z^l}\frac{\p S}{\p z_l}}\,,
\eq
where ${\de^{jk} {\p S}/{\p z^k}}= {\p S}/{\p z_j}$, we obtain
\bal
\hat{G}^{ij}(z)&=Y_S^i(z,\p S/\p z)M^j(\p S/\p z)
\ncr
&=Y_S^i \, \frac{\p S}{\p z_j} \Big/{\frac{\p S}{\p z^l}\frac{\p S}{\p z_l}}\,,
\label{GG}
\eal
where   now we  interpret $\hat{G}$ to  be  a given matrix function of the coordinated  $z$.  With this choice we have
\bq
(H,g\rangle= \frac{\p H}{\p z^i} Y_S^i(z,\p S/\p z)\, M^j(z,\p S/\p z)\frac{\p g}{\p z^j} =0 \,,
\nn
\eq
for  all functions  $g$,  which builds in  degeneracy in the first argument of  the bracket. 

Thus any GENERIC vector  field  \eqref{GVF}   can  be generated by the bilinear bracket of the form of  \eqref{Gblin}.  However,  in  general $\hat{G}$ is not symmetric, so we  do not yet have a metriplectic 2-bracket.   So next we show, given some assumptions, how to symmetrize $(f,g\rangle$.   Although above we considered only  the dissipative dynamics, consider the full  dynamics for  some observale $o\in \La^0(\calz)$ to take the form,  
\[
\dot{o} =  J(\mathbf{d}o, \mathbf{d}H) + \hat{G}(\mathbf{d}o, \mathbf{d}S)
\]
where $J$ is a Poisson bracket, $\hat{G}$ is a rank $2$-tensor,  such as that of \eqref{GG}, which we assume satisfies  
\[
\hat{G}(\mathbf{d}H, \cdot) = 0,
\]
and $S$ and $H$ are fixed and have never vanishing differentials.\\

\begin{lemma}
There exists a symmetric tensor ${G}$ such that 
\bqy
\dot{o} &=&  J(\mathbf{d}o, \mathbf{d}H) + \hat{G}(\mathbf{d}o, \mathbf{d}S)
\nonumber\\
&=&  J(\mathbf{d}o, \mathbf{d}H) +  {G}(\mathbf{d}o, \mathbf{d}S)
\eqy
and 
\bq
  {G}(\mathbf{d} f,\mathbf{d}g) =  {G}(\mathbf{d}g,\mathbf{d}f),
\eq
for all $f,g\in\La^0(\calz)$, provided the following criteria are met: 
\begin{itemize}
  \item[(i)]  $S$ is a fixed generator 
  \item[(ii)]  We do not care about the symmetry properties of $\hat{G}$. That is, we only care about $\hat{G}(\mathbf{d}o,\mathbf{d}S)$ not $\hat{G}(\mathbf{d}S,\mathbf{d}o) $. 
  \item[(iii)] $\hat{G}$ is unimportant except for the equations of motion it generates.
\end{itemize}
Thus, we can make an {\bf equivalent} dynamical system generated by a  symmetric bilinear  form  with symmetric tensor ${G}$.  
\end{lemma}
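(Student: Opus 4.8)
The plan is to reduce the claim to a pointwise (fiberwise) question in linear algebra and then write down an explicit symmetric completion. The content of criteria (i)--(iii) is that $S$ is frozen and that $\hat G$ enters the equations of motion only through the single vector field $Y_S^i=\hat G^{ij}\,\p S/\p z^j$; hence ``generating equivalent dynamics'' means exactly that $G$ must reproduce the action of $\hat G$ on the one covector $\bfd S$, i.e.\ at every point of $\calz$ I need a symmetric $G^{ij}$ with
\[
G^{ij}\,\frac{\p S}{\p z^j}=\hat G^{ij}\,\frac{\p S}{\p z^j}=Y_S^i\,.
\]
The values of $G$ in directions transverse to $\bfd S$ are unconstrained, and it is precisely this freedom that makes a symmetric choice available.

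First I would point out why the obvious attempt fails, since this is the whole subtlety. Replacing $\hat G$ by its symmetric part $\tfrac12(\hat G^{ij}+\hat G^{ji})$ does \emph{not} leave the dynamics invariant: contracting with $\p_j S$ gives $\tfrac12 Y_S^i+\tfrac12\hat G^{ji}\p_j S$, and the transpose term bears no relation to $Y_S^i$ in general. The main (and essentially only) obstacle is therefore conceptual rather than computational: one must abandon the requirement that $G$ equal $\hat G$ as a bilinear form and demand agreement only on the slice $\,\cdot\,\mapsto G(\,\cdot\,,\bfd S)$. Once this is accepted the construction is forced.

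Abbreviating $s_k:=\p S/\p z^k$, $s^k:=\de^{kl}s_l$ and $|s|^2:=s_k s^k$ (strictly positive because $\bfd S$ never vanishes, exactly as in \eqref{GG}), I would set
\[
G^{ij}:=\frac{Y_S^i\,s^j+s^i\,Y_S^j}{|s|^2}-\frac{(s_k Y_S^k)\,s^i s^j}{|s|^4}\,.
\]
This is manifestly symmetric under $i\leftrightarrow j$ and smooth wherever $\bfd S\neq0$. Contracting with $s_j$ and using $s^j s_j=|s|^2$ collapses the three terms to $G^{ij}s_j=Y_S^i$ in one line, so $G(\bfd o,\bfd S)=\hat G(\bfd o,\bfd S)$ for every observable $o$ and the two dissipative vector fields coincide; adding back the Poisson term $J(\bfd o,\bfd H)$ then yields the asserted identity of the full dynamics.

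Finally I would confirm that the first law is untouched. Since $G$ is symmetric and $G^{ij}s_j=Y_S^i$, we get $\dot H=G(\bfd H,\bfd S)=\p_i H\,G^{ij}s_j=\p_i H\,Y_S^i=\hat G(\bfd H,\bfd S)=0$, the last equality being the standing hypothesis $\hat G(\bfd H,\cdot)=0$. This proves the lemma as stated. I would close with a remark that if one also wants the stronger metriplectic degeneracy $G(\bfd H,\cdot)=0$ (not required here), the identical rank-two recipe does the job after replacing $\bfd S$ by its $\de$-orthogonal component off $\bfd H$, namely $s_\perp:=s-\tfrac{s\cdot h}{|h|^2}\,h$ with $h_k=\p H/\p z^k$; since $Y_S^i h_i=0$ and $(s_\perp)_k h^k=0$, the resulting $G$ annihilates $\bfd H$ while still sending $\bfd S$ to $Y_S$. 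Positive semidefiniteness of $G$, which a bona fide metriplectic $2$-bracket additionally requires, is a separate matter that I would address only under the extra assumption $\dot S\geq0$.
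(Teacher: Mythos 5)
Your proof is correct, but it takes a genuinely different route from the paper's. The paper argues locally and then globalizes: near each point it rectifies the nowhere-vanishing covector field $\mathbf{d}S$ to a coordinate differential $\mathbf{d}z^1$, symmetrizes $\hat G$ entrywise in those adapted coordinates (so that the one column $\hat G^{i1}$ the dynamics actually sees is untouched), and glues the resulting local symmetric tensors with a partition of unity subordinate to a countable subcover; the convex combination still contracts against $\mathbf{d}S$ to give $Y_S$ because the weights sum to one. You instead exhibit a single closed-form symmetric tensor of rank at most two, $G^{ij}=(Y_S^i s^j+s^iY_S^j)/|s|^2-(s_kY_S^k)\,s^is^j/|s|^4$, and verify the contraction $G^{ij}s_j=Y_S^i$ directly; your identification of the failure of naive symmetrization as the real obstacle is also the right diagnosis. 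Your construction is more explicit and dispenses with the rectification charts, the Lindel\"of argument, and the partition of unity altogether. What it costs is an auxiliary choice of metric: $s^k=\de^{kl}s_l$ and $|s|^2$ are not coordinate-invariant, so as written your $G$ is only defined chart by chart. To obtain a global tensor you should replace $\de^{kl}$ by any fixed Riemannian metric on $\calz$ (which exists by paracompactness, as the paper notes elsewhere); this is the same implicit choice the paper already makes in arriving at Eq.~\eqref{GG}, so it is a cosmetic repair rather than a gap. Both arguments equally require $\mathbf{d}S$ to be nowhere vanishing, and neither delivers the extra degeneracy $G(\mathbf{d}H,\cdot)=0$ without further work, a point your closing remark correctly flags.
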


\begin{proof}
\\ 
Let $p \in \mathcal{Z}$. Since $\mathbf{d}S$ is nonvanishing, there exists a chart $(U_p,\{z^i_p\})$ containing $p$ such that $\mathbf{d}S = \mathbf{d}z_p^1$ \cite{Lee}. Since every manifold is Lindel\"of, we can choose a countable subcover of \{$U_p\}$ which we index in some manner as $\{U_n\}_{n \in \mathbb{N}}$. 
For every $n\in \mathbb{N}$, we define a new tensor $G_{U_n}$ on $(U_n, \{z^i_n\})$  as follows:
\bal
 &{G}_{U_n}^{ij}(z)  \coloneqq  \hat{G}^{ij}(z)\,,\qquad \mathrm{for}\  i\geq j
\ncr
 &{G}^{ij}_{U_n}(z)  \coloneqq  \hat{G}^{ji}(z)\,, \qquad \mathrm{for}\  i<j\,.
\eal
 
 We note these coordinate equalities only work our special coordinates and are not in general true for any other coordinates system. However, since each ${G}_{U_n}$ is symmetric in one coordinate system, it must be symmetric in all coordinate systems.

Since $\mathcal{Z}$ is a manifold, there exists a countable partition of unity $\{ \eta_\alpha \}$ subordinate to the open cover $\{U_n\}$. For each $\alpha$ there exists an $N$ such that $\text{supp}(\eta_\alpha) \subset U_N$. For every $\alpha$ we define $(U_\alpha, z_\alpha):= (U_N,{z_N}) $ and $G_\alpha := G_{U_N}$. Since $G_\alpha$ is defined on $\text{supp}(\eta_\alpha)$ we may define the global symmetric tensor 
\[
G = \sum_{\alpha}   {G}_{\alpha} \eta_\alpha.
\]
To see that this tensor is what we want, let $p \in \mathcal{Z}$ and let $V$ be a neighborhood of $p$ such that $V$ intersects $\text{supp}({\eta_\alpha}) $ for only a finite number of $\alpha$. For each $\beta$ such that $V$ intersects $\text{supp}({\eta_{\beta}}) $, we can use the fact that the coordinates ($U_{\beta}, \{z^{i}_\beta\}$) exist everywhere $\eta_\beta$ is supported,  to conclude that 
\bal
G(\mathbf{d}o, \mathbf{d}S)|_p &= \sum_\alpha  {G}_{\alpha}(\mathbf{d}o,\mathbf{d}S)\big|_p \eta_\alpha(p)
\\
& = \sum_\beta  {G}^{ij}_{\beta}\frac{\partial o}{\partial z^i_\beta}\bigg|_p\frac{\partial S}{\partial z^j_\beta}\bigg|_p \eta_\beta (p)
\ncr
 &= \sum_\beta  {G}^{i1}_{\beta}\frac{\partial o}{\partial z^i_\beta}\bigg|_p \eta_\beta (p) 
 \ncr
 &= \sum_\beta \hat{G}^{i1}\frac{\partial o}{\partial z^i_\beta}\bigg|_p \eta_\beta (p) 
 \ncr
 &= \sum_\beta \hat{G}^{ij}\frac{\partial o}{\partial z^i_\beta}\bigg|_p \frac{\partial S}{\partial z_\beta^j}\bigg|_p \eta_\beta (p) 
\ncr
&= \sum_\beta\hat{G}(\mathbf{d}o,\mathbf{d}S) \eta_\beta(p) = \hat{G}(\mathbf{d}o,\mathbf{d}S) \,.
\nonumber
\eal
\end{proof} 

\medskip
Care should be taken near the vanishing points of $\bfd S$ since rectification arguments fail at the vanishing points of covector fields.

\subsection{Finite-dimensional examples}
\label{ssec:finiteEX}

{In this subsection we discuss finite-dimensional systems of dimension three.  However, we note that 
it is easy to construct metriplectic 4-brackets for systems of arbitrary dimension.  For example, this can be done  by using the Lie algebra construction of Sec.\  \ref{sssec:LAMbkts}.  Moreover, one can begin with a Lie-Poisson system and construct extensions on $n$-tuples as in \cite{pjmT00} by direct product, semidirect product, etc.  (See \cite{updike} for the heavy top.) Here, for simplicity we restrict to 3 dimensions.
}

\subsubsection{Free rigid body}
\label{sssec:Frb}

The Euler's equations for the free rigid body  have a Hamiltonian structure in  terms of a Lie-Poisson bracket as discussed in Sec.~\ref{ssec:poisson}  (see Chap.\ 17 of \cite{sudarshan}).  For this 3-dimensional system  the coordinates are the three components of the angular momenta $(L^1,L^2,L^3)$ and the Lie algebra is $\mathfrak{so}(3)$. Thus, the Lie-Poisson bracket has the form of \eqref{LPbkt} with  the  coordinates $z^k$ being $L^k$ and  $c^{ij}_{\ k}=-\ep_{ijk}$. The Hamiltonian  and Casimir of the  system are given by
\bal
H&= \frac{(L^1)^2}{2I_1}+ \frac{(L^2)^2}{2I_2} + \frac{(L^3)^2}{2I_3}
\label{hrb}
\\
&\hspace{-2.3cm}\mathrm{and}\nonumber
\\
C&= (L^1)^2 + (L^3)^2+ (L^3)^2\,,
\label{angrb}
\eal
respectively.  Here the parameters $I_i$ are the principal moments of inertia.

In \cite{pjm86} the metriplectic 2-bracket was given for this system,  so as to create a system that removes (or adds) angular momentum $C$ of \eqref{angrb} while preserving the energy $H$ of \eqref{hrb} as it approaches an equilibrium of rotation about one of its principal axes.  With slight reformatting  the bracket of equation (31) of \cite{pjm86}  becomes the following:
\bal
 {(f,g)_H}&= {(f,H;g,H)}
\label{rigidbod}\\
&= -\la \left[
\frac{\p H}{\p L^k}\frac{\p H}{\p L^l} 
\left(\de^{ik} \de^{jl} -\de^{ij} \de^{lk} \right) 
\frac{\p f}{\p L^i}\frac{\p g}{\p L^j}
\right]\,.
\nonumber
\eal
Thus  we see easily that the rigid body metriplectic 4-bracket is of the form of  \eqref{so3} and is in fact the simple K-N construction of Sec.\ \ref{ssec:KN} with Euclidean  metric.

 {Our sectional curvature for entropy production generated by  \eqref{rigidbod} is
\bal
\dot{S}&= (S,H;S,H)
\nonumber\\
&=-\la\big((\nabla_L H \cdot \nabla_L  S)^2- |\nabla_L  H|^2  |\nabla_L  S|^2 \big)\geq 0\,,
\label{so3SeCu}
\eal
where $\nabla_L=\p/\p \bfL$ and the inequality follows for $\la>0$. We point out that the conventional sectional  curvature in Riemannian geometry is normalized by a denominator, 
$|\bfY|^2 |\bfX|^2  - (\bfX\cdot \bfY)^2$ for vectors $\bfX,\bfY \in \mathfrak{X}(\calz)$.   With this normalization in the present context  we would divide \eqref{so3SeCu} by $|\nabla_L  H|^2  |\nabla_L  S|^2 -  (\nabla_L H \cdot \nabla_L  S)^2$, arriving at the constant production rate $\dot S=\la$.  At the outset, we could  have defined the metriplectic 4-bracket with this normalization; however,   we chose not to do this in order to preserve multilinearity of the 4-bracket.  
 }

\subsubsection{Kirchhoff-Kida ellipse}
\label{sssec:KKe}

Kida  generalized  Kirchhoff's reduction of the 2-dimensional Euler equations  of fluid mechanics (see Sec.\ \ref{sssec:2+1}){, obtaining a reduction to a set of ordinary differential equations}.   {The reduced dynamical system describes  a constant patch of vorticity enclosed by a elliptical boundary, where as  time proceeds} the boundary remains an ellipse.  Like the free rigid body, this system is a 3-dimensional Lie-Poisson system, but instead of $\mathfrak{so}(3)$ it has the Lie algebra  $\mathfrak{sl}(2,1)$.  

In \cite{pjmMF97}   it was shown that quadratic moments of the vorticity, say  $\om(x,y)$  constitute a  subalgebra of the 2-dimensional Euler fluid Poisson bracket (see Eq.~\eqref{2DeulerBkt} below and \cite{pjm82}).  The  coordinates for  the Kirchhoff-Kida system, say $(z^1,z^2,z^3)$, are linearly related  to  the vorticity moments $(\int d^2x\,  \om x^2, \int d^2x \, \om y^2,\int d^2x\,  \om xy)$.  The Poisson tensor for this case  is
\bq
J= 
\begin{pmatrix}
0& z^3& -z^2\\
-z^3 & 0 &-z^1\\
z^2& z^1&0
\end{pmatrix}\,,
\label{JKK}
\eq
which has the associated Casimir invariant, 
\bq
C=(z^1)^2-(z^2)^2-(z^3)^2\,,
\label{hyper}
\eq
and this Casimir is a measure of the area of the Kirchhoff ellipse raised to the fourth power.  We refer the reader to  \cite{pjmF11}  for the Hamiltonian for this system, but note the level sets of $H$  are curved sheets with a symmetry direction, because they are independent of  {the coordinate} $z^2$.  Thus, orbits of this Hamiltonian system can be understood in terms of intersection of the sheets with the Casimir hyperboloid  {defined by \eqref{hyper}}, similarly to how the free rigid body can be understood in terms of the  {intersection of the } angular momentum sphere with  the Hamiltonian ellipsoid.  However, for the Kida case  one has three  classes of orbits,  corresponding to an elliptical patch rotating, librating, or stretching to infinite aspect ratio{, which are easily delineated by examining these intersections}. 

Again, using a  metriplectic 4-bracket of the form of \eqref{so3} gives our desired result.    This 4-bracket produces  dynamics that will either increase or decrease the  Casimir, implying growth or shrinkage of the area of the ellipse, while the energy is preserved.   This is essentially a finite-dimensional  version of the selective decay hypothesis (e.g.\ \cite{lieth}), and {is, in a sense,} dual to the double bracket dynamics of \cite{pjm05,pjmF11}, where  the Hamiltonian  is extremized at fixed Casimir (area).

\subsubsection{Other 3-dimensional systems}
\label{sssec:other3D}

The examples of Secs.\ \ref{sssec:KKe}  and \ref{sssec:Frb} are based on the 3-dimensional Lie algebras 
$\mathfrak{so}(3)$ and $\mathfrak{sl}(2,1)$, respectively.  According to the Bianchi classification, there are 9 real 3-dimensional Lie algebras, and one can construct finite-dimensional Lie-Poisson systems (see \cite{pjmY20}  for a listing) and then  {construct} metriplectic 4-brackets from each of these.  Many of the Lie-Poisson systems have physical realizations, e.g., Type IV was shown in \cite{pjmYT17} to underlie a simple model for the rattleback toy that defies the normal understading of chirality.  Also, outside of Lie-Poisson dynamics, there is a 3-dimensional system that describes the  invicid interaction of tilted  fluid vortex rings  \cite{pjmK23}, a system that diverges in finite time.  A natural energy   conserving metriplectic 4-bracket can easily be constructed for this system as well.

\section{Infinite-dimensional metriplectic 4-brackets: field theories}
\label{sec:infinite}

\subsection{General Hamiltonian  and metriplectic field theories}
\label{ssec:field4bkts}

Here we briefly  review some general properties of brackets for field  theories. For further development see, e.g., \cite{pjm98,pjmF11} and in a somewhat more mathematical setting \cite{pjmBGS23}.

For field theory, we replace a discrete index $i$ of finite-dimensional theories with labeling by a continuous variable $z$ and a field component index $i$, with the  degrees of freedom  denoted by  $\chi$, a multicomponent field. The functions on phase space are replaced by functionals of the dynamical degrees of freedom which are maps of $\chi\mapsto\R$. More specifically, we  consider the dynamics of classical field  theories with multi-component fields 
\[
\chi(z,t)=\big(\chi^1(z,t), \chi^2(z,t), \dots, \chi^M(z,t)\big)
\]
 defined on $z\in \cald$ for times $t\in\R$, i.e., $\chi\colon 
\cald \times\R\rightarrow\R$.   Here we use $z$ to be a label space coordinate unlike in the prevous section where it  was a dynamical variable or phase  space coordinate.  In fluid  mechanics  $\cald$ would be the 3-dimensional domain occupied by the fluid and $z$ the coordinates of  this point. We will assume in general $\cald$ has dimension $N$. 

The time rate of change of functionals, maps from fields to real numbers,   will be generated by making use of various brackets,  and these involve a notion of functional or variational derivative.  These are  defined via the first variation, which for a functional $F$ is
\bq
\de F[\chi;\eta]=\left.\frac{d}{d\epsilon}F[\chi+\epsilon\eta]\right|_{\epsilon=0} 
= \int_{\cald} d^N\!z\, \frac{\de F[\chi]}{\de \chi^i}\eta^i \,,
\label{funcder}
\eq
where again repeated indices are to be summed.  Here  $\de F[\chi;\eta]$ is the Fr\'{e}chet derivative acting on $\eta(z)$ with the integral over $z$ acting as the pairing between the quantity  $\de F/\de \chi$ (the gradient) and $\eta(z)$ (the displacement). The function that is the evaluation of $\chi$ at a point $\hat{z}$ satisfies $\de \chi(\hat{z})/\de \chi (z)= \de(\hat{z}-z)$, with $\de$ being the Dirac delta function. 

 With this notation, a  general noncanonical Poisson bracket is a binary operator on functionals, say $F$  and $G$,  of the form
\bq
\{F,G\}=\!\!\int_{\cald}\! \!d^N\!z' \! \!\int_{\cald}\! \!d^N\!z'' 
\frac{\de F[\chi]}{\de \chi^i(z')} \calj^{ij}(z',z'')\frac{\de G[\chi]}{\de \chi^j(z'')}\,,
\label{genNCbkt}
\eq
Here  $\calj$ is the Poisson operator (replacing the Poisson tensor of Sec.\ \ref{ssec:poisson}) that must ensure that the Poisson bracket satisfies: antisymmetry, $\{F,G\}=-\{G,F\}$, and the Jacobi identity,  
$\{\{F,G\},H\} +\{\{G,H\},F\} +\{\{H,F\},G\} =0$,  for all functionals $F,G,H$.  As before, this   form builds in  bilinearity and the Leibnitz derivation properties. 

For  the  dynamics on infinite-dimensional Poisson manifolds $\calj$ is degenerate, as was the case in finite dimensions.  When it is degenerate and the nontrivial null space gives rise to the   Casimir invariants, $C$,  that satisfy $\{C,G\}=0$ for all functionals $G$. Thus all  is formally, if not rigorously, equivalent to  the finite-dimensional development of Sec.\ \ref{sec:finite}. (See e.g.\  \cite{pjm98,pjm05} for review.)

General symmetric brackets  were given in \cite{pjmF11}, ones that  can generate double bracket or metriplectic 2-bracket dynamics, 
\bq
(F,G)=\!\int_{\cald}\! \!d^N\!z' \! \!\int_{\cald}\! d^N\!z''\, 
\frac{\de F[\chi]}{\de \chi^i(z')}\, \calg^{ij}(z',z'')\frac{\de G[\chi]}{\de \chi^j(z'')}\,, 
\label{gensymbkt}
\eq
where the {\it }metric operator $\calg$, analogous to the $G$-metric of Sec.\ \ref{sec:finite},  is chosen to ensure $(F,G)=(G,F)$ and to be semidefinite.  We may also want to build degeneracies into $\calg$ so that there exist distinguished functionals $D$ that satisfy $(D,G)= 0$ for all $G$.  
   
A specific  form of general of \eqref{gensymbkt} was given in \cite{pjmF11}, which  is generalization of the symmetric brackets given in previous works \cite{pjm86,vallis,carn90,shep90,pjm09}, 
\bq
(F,G)= \!\! \int_{\cald}\! \!d^N\!z' \! \!\int_{\cald}\! d^N\!z''\,  \{F,\chi^i(z')\} \calk_{ij}(z',z'') \{\chi^j(z''),G\} \,,
\label{symbkt}
\eq
with $\{F,G\}$ is any Poisson bracket and $\calk$ is a  symmetric kernel that can be chosen at will, e.g., to effect smoothing.  With this form, 
the  Casimir invariants of $ \{F,G\}$ will automatically be distinguished functionals $D$.

Proceeding we define a general form for the field-theoretic metriplectic 4-bracket by replacing the 4-tensor by a   4-tensor-functional with coordinate form given by the following integral kernel:
\bqy
&& \hat{R}^{ijkl}(z, z', z'', z''') [\chi(z)) ] 
\\
&&\hspace{1cm} = \hat{R}(\mathbf{d} \chi^i(z) ,\mathbf{d} \chi^j(z') , \mathbf{d} \chi^k(z'')  , \mathbf{d}\chi^l(z''') ) [ \chi({z}) ] \,.
\nonumber
\eqy
Formally identifying the functional derivative with the exterior derivative we are lead naturally to the following metriplectic 4-bracket on functionals:
{
\bqy
(F,G; K , N)\! &=&\! \! \int  \! \!  d^N\!z \!\! \int \! \! d^N\!z'\! \! \int \! \! d^N\!z''  \!\! \int \! \! d^N\!z'''  \hat{R}^{ijkl}(z,z'\!,z''\!,\!z''')
\nonumber\\
&& \hspace{-.2cm} \times \frac{\delta F}{\delta \chi^i(z)} \frac{\delta G}{\delta \chi^j(z')} \frac{\delta K}{\delta \chi^k(z'')} \frac{\delta N}{\delta \chi^l(z''')} \,.
\label{MPFT4bkt}
\eqy
}
with {properties} built in making it minimally metriplectic as discussed in Sec.\ \ref{sec:finite}.

Because we are dealing with field theory,  the quantity {$\hat{R}^{ijkl}(z,z',z'',z''')$} of \eqref{MPFT4bkt} should be  defined distributionaly and in general is an operator acting on the  functional derivatives.  In particular, we could  write $\hat{R}$ in terms of it's Fourier transform. For pseudo-differential operators 
{\bqy
&&\hat{R}^{ijkl}(z,z',z'',z''') =
\nonumber\\
&& \hspace{.8cm} \int \!\!d^N\!{p} \, e^{i p \cdot z}\!\! \int \!\!d^N\!{p'} \, e^{i p' \cdot z'}\!\! \int \!\!d^N\!{p''} \, e^{i p'' \cdot z''}\!\! \int \!\!d^N\!{p'''} \, e^{i p''' \cdot z'''}
\nonumber
\\&&  \hspace{4cm}\times \ \tilde{R}^{ijkl}(p,p',p'',p''') \,.
\eqy
}

Analogous  to Sec.\  \ref{sssec:Lie-M}, we say  $\hat{R}$ is a Lie-metriplectic  if $\hat{R}$ does not depend directly on the values of the field variable $\chi$, although it can depend on the label $z$. When such a dependence  is present, we can interpret it   as a location dependent ``curvature" in the {manifold} of functions.

\subsection{Reduction to special cases in infinite dimensions}
\label{ssec:recuctionsI}

Reductions of the field theoretic metriplectic 4-bracket of \eqref{MPFT4bkt} {follow} in the same manner as the finite-dimensional reductions of Sec.\ \ref{ssec:recuctionsI}.  The metric 2-bracket follows as expected, $(F,G)_H=(F,H;G,H)$, the  {K-M} bracket according to $[F,G]_S=(F,G;S,H)$, various double brackets follow  from $(F,G)_S=(F,S;G,S)$, etc.  In Secs.\ \ref{ssec:fluid} and \ref{ssec:KT} we will give many examples that demonstrate  these reductions.  Rather than  treating a general case of linearizing and symmetrizing a GENERIC bracket, in  Sec.\ \ref{sssec:GBoltz} we do so for the specific case of bracket for   the Boltzmann equation given in \cite{grm84}.

\subsection{Fluid-like examples}
\label{ssec:fluid}
 
\subsubsection{1 + 1 fluid-like theories}
\label{sssec:1+1}

Now consider the case were we have a single real-valued field variable depending on one space and one time independent variable, $u(x,t)$.  We will give three  examples of dissipation generated by a 4-bracket.   We do so by using a version of the K-N decomposition of Sec.\ \ref{ssec:KN}, where the  tensors $\si$ and $\mu$ are in this field-theoretic context replaced by symmetric operators $\Si$ and  $M$. 
Using these operators a field  theoretic version of the   {K-N} product gives a 4-bracket of the following form:
\bqy
 (F,K;G,N) &=& \int_\R dx\, W (\Sigma \KN M)(F_u,K_u,G_u,N_u)
\nonumber\\ 
&=& \int_\R dx\, W\Big(\Sigma(F_u,G_u) M(K_u,N_u)
 \nonumber\\ 
 &&\qqquad  - \Sigma(F_u,N_u) M(K_u,G_u)
  \nonumber \\ 
&&\qqquad + M(F_u,G_u) \Sigma(K_u,N_u)
  \nonumber \\ 
  &&\qqquad - M(F_u,N_u) \Sigma(K_u,G_u) 
  \Big)\,,
  \label{KN11}
\eqy
where $W$ is an arbitrary  weight, depending on $u$ and $x$,  that multiplies $\Sigma \KN M$ and for  convenience we define  $F_u=  \de F/\de u$. We are  free to choose $W$ without destroying the 4-bracket algebraic symmetries. We note, as we will show, the form of 4-bracket of \eqref{KN11} can be generalized in various ways  to higher dimensions of both the dependent and  independent variables. 

\bigskip
For our {\it first example} we assume  the following symmetric operators:
\bqy
\Sigma(F_u,G_u) &=& - \frac{d}{dx} \frac{\de F}{\de u}\,  \frac{d}{dx} \frac{\de G}{\de u} 
=-\p F_u\p G_u
\label{Sigma1}
\\
M(F_u, G_u) &=& \frac{\de F}{\de u} \, \frac{\de G}{\de u}= F_u G_u\,,
\label{M1}
\eqy
where again for convenience we simplify  the notation by defining $\p= {\p}/{\p x}$.
In addition  we assume $W=\nu$, some constant, and the Hamiltonian and Casimir are given by
\bq
H=\int_\R dx \,u \andq S=\frac12 \int_\R dx\,u^2\,.
\label{HS1}
\eq
Inserting these into \eqref{KN11} gives 
\bq
(F,G)_H= (F,H;G,H)=\nu\int_\R dx\, F_u \p^2 G_u\,,
\eq
which produces in an equation of motion
\bq
(u,S)_H=\nu\, \p^2u\,,
\eq
the  usual form for  viscous  dissipation of  a  1-dimensional  fluid.  

In light of the above result and the metriplectic formalism, it is natural to ask which Hamiltonian theory has the Hamiltonian and Casimir of \eqref{HS1}?  Although not so well known,  one can construct a 1 + 1 Poisson bracket that has any desired  Casimir.  To this end, consider
\bq
\{F,G\}=\int_\R dx\, h(u) \big(F_u\p G_u-G_u\p F_u\big)\,,
\label{hbkt}
\eq
where the function $h(u)$ is unspecified. Using a theorem of \cite{pjm82} it is easy to show that \eqref{hbkt} satisfies the Jacobi identity.  A bracket of the form of \eqref{hbkt} that has $\int_\R dx \,u^2/2$ as a Casimir must satisfy
\bq
\{F,C\}= 0 \ \forall F\quad \Rightarrow \quad  2 h \p C_u + C_u\p u=0\,,
\eq
which easily solved to yield $h=1/u^2$.  Ignoring the singularity, we proceed and obtain the Hamiltonian dynamics with Hamiltonian of \eqref{HS1}, viz. 
\bq
\frac{\p u}{\p t}=\{u,H\}= \p (u^{-2})\,.
\label{HS2}
\eq
Thus, our metriplectic  system of  this example is 
\bq
\frac{\p u}{\p t}=\{u,H\}= \p (u^{-2})+\nu \p^2 u\,.
\label{HS3}
\eq

As with the Harry Dym equation \cite{kruskal75}, the singularity of \eqref{HS2}  can be removed by coordinate changes.  For example, setting  $w=2/u^3$ takes \eqref{HS2} into 
\bq
\frac{\p w}{\p t}= - w \p w\,,
\eq
the inviscid Burger's equation.   The bracket of \eqref{hbkt} can  be  transformed via the chain rule  into many forms:  the form  where $h=u$ is the Lie-Poisson  form and the  form  where $h$ is constant  is  Gardner's bracket \cite{gardner}, 
\bq
\{F,G\}=\int_\R  dx\, G_u \p F_u\,.
\eq

\bigskip

Our {\it second example} uses Gardner's bracket with  the  Hamiltonian 
\bq
H = \frac12\int_\R dx\, \left(  \frac{{u^3}}{6} - \frac{(\p u)^2}{2}  + c \frac{u^2}{2} \right)\,, 
\label{kdvHam}
\eq
which together generate the Korteweg-De Vries equation  in a  frame boosted by speed $c$. Gardner's bracket has the Casimir
\bq
S=\int_\R dx\, u\,.
\eq
Thus we have all the ingredients needed to construct a metriplectic system with dissipation that conserves  \eqref{kdvHam}.  Using \eqref{KN11}, again  with \eqref{Sigma1} and \eqref{M1}, this  dissipation is generated  by
\bqy
(u,S)_H&=&(u,H;S,H)
\nonumber\\
&=&-\p(WH_u\p H_u) - W (\p  H_u)^2
\eqy
where 
\bq
H_u=  c u+ \frac{u^2}{2} +  \p^2 u\,, 
\eq
we  leave $W$ arbitrary,  and
\bq
\dot{S}= -\int_\R dx\, W \left( \p  H_u\right)^2\,.
\label{dotS11}
\eq
By design, the  righthand side of  \eqref{dotS11} vanishes when  evaluated on $a\,$sech$^2(\alpha x)$, the  boosted single soliton solution, with appropriate $a$ and $\al$.

\bigskip

In our {\it  third  example} of  this  subsection, our final example,   we  choose for $\Si$, 
\bqy
\Si(F_u, G_u)(x) &=&
 \p F_u(x)   \calh[G_u](x)  
 \nonumber\\
&& \qquad +  \p  G_u(x)   \calh[F_u](x) \,,
\eqy
where $\calh$ is  the Hilbert transform
\bq
\calh[u]=\frac1\pi\dashint_\R dx' \frac{u(x')}{x-x'}\,,
\eq
with $\dashint$ denoting  the Cauchy principal value integral.  (See e.g.\  \cite{king}.)
For $M$ we choose again that of \eqref{M1} 
and again we choose  the Hamiltonian and entropy of \eqref{HS1}. Note,  $\Si(F_u, H_u)=0$ for all functionals $F$ because $H_u=1$, $\p H_u=0$, and  $\calh[1]=0$.  Thus we obtain
\bqy
(F,G)_H&=& (F,H;G,H)=  \int_\R dx\, W\, \Sigma(F_u,G_u) 
\\
&=& \int_\R \!\! dx\, W \big( \p F_u   \calh[G_u]  
+  \p  G_u   \calh[F_u] \big)\,.
\nonumber
\eqy
Using the formal anti-self-adjoint property   of the Hilbert transform, i.e., 
\bq
\int_\R dx \, f \, \calh[g]= -\int_\R dx \, g\,  \calh[f]\,,
\eq
 assuming $W$  is constant,  and noting that $\p \calh[u]=\calh[\p u]$,  we obtain
\bqy
(u,S)_H&=&-W\left( \p \calh[u] + \calh[\p u]\right)
\nonumber\\
&=&-2W\, \calh[\p u]\,.
\eqy
Upon choosing $W=\al_1/(4\sqrt{2\pi})$ we see this is precisely Ott \& Sudan  dissipation  \cite{ott-sudan} proposed for modeling electron Landau damping in a fluid model. This form has been used extensively in the magnetic fusion literature, based on a latter paper \cite{hammett-perkins}.

\bigskip

In this section we have seen how a variety of dissipation mechanisms in 1 + 1 models can be generated by 4-brackets.  Indeed, there is considerable room  for   generalization, e.g., in our last example by replacing the opertors $\p$ and $\calh$ by any formally anti-self-adjoint  operators.   In \cite{pjmBR13} (see Sec.\ 4.4) a large family of dissipative structures were given in terms of multilinear forms with symmetries that build  in invariance of a chosen set of quantities.   As 4-brackets build in the invariance of $H$, we can extend  to other quantities generalizing the present framework.   Proceeding along these lines is beyond  the scope of the present paper.

\subsubsection{2  +  1 plasma and  fluid-like theories}
\label{sssec:2+1}

A large  class of 2  +  1  Hamiltonian fluid-like theories exist in the fluid mechanics and plasma physics literature. 
These include the 2-dimensional Euler equation for the dynamics of scalar vorticity and, for  example, generalizations including  quasigeostrophic dynamics of the potential vorticity which have a single scalar field defined on a some 2-dimensional domain, say with coordinates $(x,y)$.    Another example is  the 1-dimensional Vlasov-Poisson system of plasma physics, for which the domain is the  2-dimensional phase space with coordinates, say with $(x,v)$. These theories all have a noncanonical Poisson bracket with a Lie-Poisson bracket based on the Lie-algebra realization on functions, (see  \cite{pjm80,pjm82,pjm98}), often called the symplectomorphism algebra, 
\bq
[f,g] =\frac{\p f}{\p  x} \frac{\p g}{\p  y} -\frac{\p f}{\p  y} \frac{\p g}{\p  x} \,,
\eq
with the infinite-dimensional Lie-Poisson bracket being
\bq
\{F,G\}= \int d^2x\,  \om [F_\om,G_\om]\,,
\label{2DeulerBkt}
\eq
where we use  the shorthand $F_\om=\de F/\de \om$.

Quite naturally the infinite-dimensional metriplectic  4-bracket akin to the finite-dimensional 4-bracket of  \eqref{LA4bkt} is the following:
 \bal
 (F,K;G,N)&=\int\! d^2x \int \!d^2x' \, \calg(\bfx,\bfx')
 \ncr
 &\hspace{1cm} \times [F_\om,K_\om](\bfx)\,  [G_\om,N_\om](\bfx')\,,
 \eal
 which for symmetric $\calg(\bfx,\bfx')$ has the  minimal metriplectic symmetries. 
In the special case where $\calg=\la \de(\bfx-\bfx')$ with  $\la\in\R$, this  reduces to 
\bq
 (F,K;G,N)=\la \int \!d^2x \,  [F_\om,K_\om] [G_\om,N_\om]\,.
 \label{1+24bkt}
 \eq 

If we insert the enstrophy 
\bq
S=\frac12\int d^2x\, \om^2
\eq
 into \eqref{1+24bkt} as follows, we obtain:
\bal
 (F,S;G,S)&=\la \int \!d^2x \,  [F_\om,S] [G_\om,S]
 \ncr
 &=\la\int \!d^2x \,  [F_\om,\om] [G_\om,\om]
  \ncr
 &=\la \int \!d^2x \,  F_\om[\om,[G_\om,\om]]
 \eal
Next, with the   2-dimensional Euler Hamiltonian 
\bq
H=\frac12\int d^2x\, \om\psi 
\eq
where the stream function  $\psi$ satisfies $\nabla^2\psi =\om$ and $H_\om=\psi$, 
we obtain
\bq
\frac{\p \om}{\p t}=  (\om, S;H,S)= -\la [\om,[\om,\psi]]\,,
\eq
which gives  the   double bracket dynamics first proposed in \cite{vallis}, which was generalized  and used extensively in a variety of contexts in \cite{pjm05,pjmF11,pjmF17,pjmFWI18,pjmFS19,pjmF22}.  In light of the development of Sec.\ \ref{sssec:double},  this was to be expected. 

Next, it is natural to ask what is the metriplectic 2-bracket that results from \eqref{1+24bkt}.  We find
\bal
(F,G)_H&=(F,H;G,H)
\ncr
&=\la \int \!d^2x \,  [F_\om,H] [G_\om,H]\,,
\eal
which is the metriplectic  2-bracket recorded   in   \cite{Gay-Balmaz2013} and   \cite{pjmBR13}. 
Extensive calculations using this  bracket appeared in the context  of 2-dimensional Euler flows and a generalization to magnetohydrodynamics  in the Ph.D.\  thesis of C.\ Bressan  \cite{cb22}.  Preliminary results were published  in  \cite{pjmBKM18} and the main results  will appear in  a paper under preparation  \cite{pjmBKM23}.   Our results  in these works reveal a caveat: Because of degeneracy, the system may not relax to what one expects!  This problem is  remedied by using a bracket based on  that given  in  Sec.\ \ref{sssec:landauCO} below.

\subsubsection{3 + 1 fluid-like theories}
\label{sssec:3+1}

Next  we consider two 3 + 1 fluid-like systems.  We choose our set of dynamical variables to be composed of densities,  $\chi=\{\rho, \si, \bfM\}$, where $\rho$ is the mass density, $\si$ is entropy per unit volume, and $\bfM=(M_1,M_2,M_3)$ is momentum density, all of which depend on $\bfx=(x_1,x_2,x_3)$,  a Cartesian coordinate.  In our first  example we choose our entropy Casimir to be the actual entropy of a fluid system, while for the second we choose the  helicity, which is a Casimir, to be  our entropy. 

\bigskip
 
In our {\it first example}  we desire a theory that conserves total mass, momentum and energy, with thermodynamics only depending on two thermodynamic variables, say $\rho$ and $\si$.  Generalizations where we include the chemical potential are possible, but we won't consider such now.  Thus we expect our 4-bracket to not depend on  functional  derivatives  with respect to $\rho$,  which  might produce density diffusion.  We  build a theory out of a 
K-N pair.  

The simplest choice imaginable for $M$ is given by 
\bq
 M(F_\chi,G_\chi)=F_\sigma G_\sigma  
\eq
where as before $F_\si=\de F/\de \si$.   Since $\Si$ will involve pairs of functional derivatives $F_{\bfM}=\de F/\de \bfM$ and, analogous to \eqref{M1}, derivatives so as to assure a diffusive nature, we are thus  led  the general  isotropic (invariant under rotations) Cartesian tensor of order 4, 
\bqy
\hat{\La}_{ikst}&=&\al \de_{ik}\de_{st} + \be (\de_{is}\de_{kt} +  \de_{it}\de_{ks}) 
\nonumber\\
&&\qqquad  +\  \ga (\de_{is}\de_{kt}-  \de_{it}\de_{ks})
\label{hatla}
\eqy
as an ingredient for creating  $\Si$, which might build-in Galilean symmetry.  Given the above we assume
\bq
\Si(F_\chi,G_\chi)=   \hat{\La}_{ijkl} \, \p_j F_{M_i}\p_k G_{M_l} + a \, \nabla F_\sigma \cdot \nabla G_\sigma\,.
\label{siga}
\eq
where $\p_i:= \p /\p x_i$, $F_{M_i}=\de F/\de M_i$,  and we assume  $\al,\be,\ga,a$ are arbitrary functions  the thermodynamics variables $\rho$ and $\si$.  Putting this all together in the 3 + 1 context we obtain
{\bqy
(F,K;G,N) &=& \int d^3{x} \, (\Sigma \KN M)(F_\chi,K_\chi,G_\chi,N_\chi)
\nonumber\\
&=& \int d^3{x} \, \big( \Sigma(F_\chi,  G_\chi) M(K_\chi,N_\chi)
   \label{KN31}\\ 
 &&\qqquad  - \Sigma(F_\chi, N_\chi) M(K_\chi,G_\chi)
  \nonumber \\ 
&&\qqquad + M(F_\chi,G_\chi) \Sigma(K_\chi,N_\chi)
  \nonumber \\ 
  &&\qqquad - M(F_\chi,N_\chi) \Sigma(K_\chi,G_\chi) 
  \big)\,. 
\nonumber
\eqy}
Now, choosing the parameters with a specific target in mind we pick
\bqy
\Si(F_\chi,G_\chi) &=&  \frac{({\xi-2\eta/3}) }{\lambda T} (\nabla\cdot F_\mathbf{M})(\nabla\cdot G_\mathbf{M}) 
\nonumber\\
&& +\  \frac{\eta}{\lambda T}\big({\partial_i F_{M_k}} {\partial_i G_{M_k}} +  {\partial_iF_{M_k}}  {\partial_k G_{M_i}}\big)
\nonumber\\
&&+ \frac{\kappa}{\lambda T^2}(\nabla F_\sigma \cdot \nabla G_\sigma) \,,
\eqy
{where  choices for  the  parameters  $\al,\be,\ga$ of \eqref{hatla} and $a$  of \eqref{siga}  have  been made,  giving the parameters  temperature $T$,   viscosities $\xi$  and $\eta$, and thermal conductivity $\ka$.  This} leads to the following complicated 4-bracket: 
\bqy
(F,K; G, N) &=&  
 \int \! \frac{d^3x}{T} \, \frac{({\xi-2\eta/3})}{\lambda} 
 \nonumber\\
&&\hspace{-2.2cm} \times\  \Big[K_\sigma \nabla \cdot F_{\mathbf{M}} - F_\sigma \nabla \cdot K_{\mathbf{M}}\Big] 
 \Big[N_\sigma \nabla \cdot G_{\mathbf{M}} -G_\sigma \nabla \cdot N_{\mathbf{M}}\Big]
\nonumber\\
&&\hspace{-1.9cm} +  \int \! \frac{d^3x}{T} \,   \frac{\eta}{\lambda}
\Big(F_\sigma G_\sigma \big({\partial_i K_{M_k}}{\partial_i N_{M_k}} + \,  {\partial_i K_{M_k}}{\partial_k N_{M_i}}\big) 
\nonumber\\
&&  \hspace{-.15cm}
 +\   K_\sigma N_\sigma \big({\partial_i G_{M_k}}{\partial_i F_{M_k}} + \,  {\partial_i G_{M_k}}{\partial_k F_{M_i}}\big) 
\nonumber\\
&& \hspace{-.15cm} -\ K_\sigma G_\sigma  \big({\partial_i F_{M_k}}{\partial_i N_{M_k}} + \,  {\partial_i F_{M_k}}{\partial_k N_{M_i}}\big) 
\nonumber\\
&&  \hspace{-.15cm}  -  \ F_\sigma N_\sigma \big({\partial_i G_{M_k}}{\partial_i K_{M_k}} + \,  {\partial_i G_{M_k}}{\partial_k K_{M_i}}\big) \! \Big)
\nonumber\\
&& \hspace{-.3cm} + \  \int \! \frac{d^3x}{T^2} \, \frac{\kappa}{\lambda} \Big(  
F_\sigma G_\sigma \big(\nabla K_\sigma \cdot \nabla H_\sigma\big)
\nonumber\\
&& \hspace{1.5cm}   + K_\sigma N_\sigma (\nabla F_\sigma \cdot \nabla G_\sigma)
\nonumber\\
&&\hspace{1.5cm}  - \ N_\sigma G_\sigma\big(\nabla F_\sigma \cdot \nabla H_\sigma\big) 
\nonumber\\
&& \hspace{1.5cm}   - F_\sigma N_\sigma \big(\nabla K_\sigma \cdot \nabla G_\sigma\big) \Big)\,.
\label{NS4bkt}
\eqy
With the ideal fluid Hamiltonian  
\bq
H=\int \!d^3x {\left(\frac{|\bfM|^2}{2\rho} + \rho U(\rho, s)\right)}
\label{fluidHam}
\eq
where $\bfM=\rho \bfv$ and $\si=\rho s$ with $s$ being the specific entropy and $\rho$ the  mass  density, 
the  4-bracket  of   \eqref{NS4bkt} yields the following metriplectic 2-bracket:
\bqy
(F,G)_H &=& (F,H;G,H)
\nonumber\\
&=&\frac{1}{\lambda} \int  \! d^3x\, T\Lambda_{i k m n} \!\left[\frac{\partial }{\partial x_i}\left( \frac{\delta F}{\delta M_k}\right) - \frac{1}{T } \frac{\partial v_i}{\partial x_k}\frac{\delta F}{\delta \sigma}\right] 
\nonumber\\
&&\hspace{1.85cm}  \times\  \left[\frac{\partial }{\partial x_m}\left(\frac{\delta G}{\delta M_n}\right) 
- \frac{1}{ T } \frac{\partial v_m}{\partial x_n}\frac{\delta G}{\delta \sigma}\right] \nonumber\\
& +&
\int \! d^3x\,  \kappa T^2 \frac{\partial }{\partial x_k}\left[\frac{1}{ T} \frac{\delta F}{\delta \sigma}\right]
 \frac{\partial }{\partial x_k}\left[\frac{1}{ T} \frac{\delta G}{\delta \sigma}\right]\,,
 \label{pjm84b}
\eqy
where 
\bq
\Lambda_{i k m n} = \eta (\delta_{ni}\delta_{mk} + \delta_{nk}\delta_{mi} - \frac{2}{3}\delta_{ik}\delta_{mn}) + \xi \delta_{ik}\delta_{mn}\,.
\eq
We have written out \eqref{pjm84b} without abreviations so it is easy to see it is  precisely the  metriplectic bracket first given in \cite{pjm84b}.  The dynamics generated  by this bracket follows upon inserting  the entropy functional
\bq
S[\si]=\int \! d^3x \, \si\,{;}
\eq
{accordingly  $(\bfM,S)_H$ produces  a kind of viscous dissipation,  while $(\si,S)_H$ }  gives an entropy  equation with  thermal conduction and  viscous heating.   Together with  the ideal fluid Hamiltonian bracket given in \cite{pjmG80}, the metriplectic  system so generated,  is a version of the Navier-Stokes equation that conserves the energy of  \eqref{fluidHam} while producing entropy; i.e., it  produces  a fluid  dynamical realization of the first and second laws  of  thermodynamics.  See  \cite{pjm84b}   and \cite{pjmC20}  for details.

\bigskip

In our {\it second example}  we choose the helicity
\bq
S[\bfv]=\int \!d^3x \, \bfv\cdot\nabla\times\bfv \,,
\eq
which   is known to be a  Casimir for the ideal barotropic fluid \cite{pjm98}, to be our entropy.  We  insert this  into the metriplectic 4-bracket of \eqref{NS4bkt} along with the Hamiltonian of \eqref{fluidHam}, to obtain  
{\bqy
(F,S)_H&=&(F,H,S,H) 
\nonumber\\
&=&\frac1{\lambda} \int\!d^3x \bigg({T({\xi-2\eta/3}) } (\nabla\frac{1}{\rho} \cdot (\nabla \times \mathbf{v}))
\,\nabla\cdot F_\mathbf{M}
\nonumber\\
&&  + {T \eta}\Big({\partial_i \big((\nabla \times \mathbf{v})_k/\rho\big)}  ({\partial_i F_{M_k}}+ {\partial_k F_{M_i}})
 \Big) 
\nonumber\\
&&-  F_\sigma \Big( {({\xi-2\eta/3})}(\nabla \frac{1}{\rho} \cdot(\nabla \times \mathbf{v})) \nabla\cdot \mathbf{v}
\nonumber\\
&&+  \, \eta\, \partial_i\big((\nabla \times \mathbf{v})_k/\rho\big) (\partial_i v_k + \partial_k v_i) \Big)\bigg).
\eqy

This bracket will make entropy, helicity, while conserving the energy $H$  of \eqref{fluidHam}.  This is an interesting system in  its own right, which will be further investigated elsewhere.

\subsection{Kinetic theory examples}
\label{ssec:KT}

\subsubsection{Landau-like Collision Operator}
\label{sssec:landauCO}

In \cite{pjm84,pjm86} the metriplectic 2-bracket for the Landau-Lenard-Balescu (LLB) collision operator was given, one  that  generated a gradient flow using the  standard entropy.    Here we show how this 2-bracket dynamics  comes from a metriplectic 4-bracket.   The basic variable of this theory is the phase space density $f(z,t)$ (the distribution function), where a 6-dimensional phase space coordinate is $z=(\bfx, \bfv)=(x_1,x_2,x_3,v_1,v_2,v_3)$, which is typically a point in $T^*Q$, where $Q$ is a configuration manifold. Here  we won't emphasize  the  geometry and think of this  as $\R^6$.  By $\int\! d^6z$ we will mean an integration of this phase space.  

For functionals defined on $f$, as before, we abbreviate ${\delta F}/{\delta f}=F_f$, and for convenience 
 we define for functions $w\colon\R^6\rightarrow \R$,  the operator $P$, 
\bq
P[w]_i = \frac{\partial w(z)}{\partial v_i}  - \frac{\partial w(z')}{\partial v'_i} 
\eq
which is  a linear operator mapping functions on some subset of $\mathbb{R}^6$ to functions from $\mathbb{R}^{12} \rightarrow \mathbb{R}^3$ via the expression.  Also,  we define $\mathbf{g} = \mathbf{v} - \mathbf{v}'$ 
and 
\bqy
\omega_{ij} &=& \frac{1}{|\mathbf{g}|^3}(|\mathbf{g}|^2\delta_{ij} - g_i g_j)\delta(\mathbf{x} - \mathbf{x'}) 
\label{omdef}\\
&=& \delta(\mathbf{x} - \mathbf{x'})\frac{\partial^2}{\partial v_i \partial v_j}|\mathbf{v} - \mathbf{v}'|
=\delta(\mathbf{x} - \mathbf{x'})\frac{\partial^2 |\mathbf{g}|}{\partial v_i \partial v_j} \,.
\nonumber
\eqy
From \eqref{omdef}  it follows that 
\bqy
\omega_{ij} (z,z') &=& \omega_{ji} (z,z')\,, 
\\
\omega_{ij} (z,z') &=& \omega_{ij}(z',z)\,,
\\
 g_i\;\omega^{ij} &=& 0\,.
\eqy
Given the above, we can write down the  metriplectic 2-bracket that produces the LLB collision operator, the bracket that  was given in \cite{pjm84,pjm86}, 
\bq
(F, G)_H = {\int \! d^6{z}\! \int  \!d^6{z'} \,} P\left[{F_f}\right]_i \, T^{ij} \, P\left[{G_f}\right]_j\,,
\label{L2bkt}
\eq
where 
\bq
T^{ij} = \frac{1}{2}f(z)f(z')\;\omega^{ij}(z,z')\,,
\eq
with $\om_{ij}$ given by \eqref{omdef}.  The symmetric metriplectic 2-bracket of \eqref{L2bkt} together with the Poisson bracket for the Vlasov-Poisson system \cite{pjm80,pjm82},
\bq
\{F,G\}=\int\! d^6z\, f\, [F_f,G_f]\,,
\label{VPbracket}
\eq
with
\bq
[f,g]:=  \frac1m\left(\frac{\p f}{\p \bfx}\cdot \frac{\p g}{\p \bfv}- \frac{\p g}{\p \bfx}\cdot \frac{\p f}{\p \bfv}
\right)\,, 
\eq
generates the collisional Vlasov-Poisson system.  {This follows}  if  the following Hamiltonian for the Vlasov Poisson system, 
\bqy
H[f] &=&\frac12 \int\!d^6z\,  v^2 f(z) 
\label{VPham}\\
&& \qquad  + \frac12   {\int \! d^6{z}\! \int  \!d^6{z'} \,} V(z,z') f(z)f(z') \,, 
\nonumber
\eqy
{where $V$ is the Coulomb interaction potential,} is inserted along with  an appropriate Casimir chosen from the  set of Casimirs of \eqref{VPbracket}, {viz.}\ $\int\!d^6z\,  \calc(f)$ with $\calc$ an arbitrary function of $f$.  Choosing the following, which is  proportional to the  physical entropy:
\bq
S[f] = \int\! d^6 z\,  f \log f\,,
\label{VPentropy}
\eq
we have the results of  \cite{pjm84,pjm86}, where the system is generated by
\bq
\frac{\p f}{\p t}= \{f,H\} + (f,S)_{H}= \{f,\calf\} + (f,\calf)_H\,,
\eq
where $\calf=H + S$.

Now we construct the metriplectic 4-bracket, which comes quite naturally upon using a generalization of the  4-bracket  of \eqref{KN11}.  Let $\calg(z,z')$ be any kernel and suppose $\Sigma$ and $M$ are symmetric (under the integral) maps from functions to functions of $z$ and $z'$. Given such $\Si$ and  $M$,  a  K-N product on functional derivatives  can be defined as follows:
\bqy
\left( \Sigma \KN M \right)&&(F_f, K_f, G_f, N_f)(z,z')  
\nonumber\\
&& = \Sigma(F_f, G_f)(z,z') \  M(K_f, N_f)(z,z') 
\nonumber\\
&& - \Sigma(F_f, N_f)(z,z')   M(K_f, G_f)(z,z')
\nonumber\\
&& + M(F_f, G_f)(z,z')   \Sigma(K_f, N_f)(z,z') 
\nonumber\\
&& - M(F_f, N_f)(z,z')  \Sigma(K_f, G_f)(z,z') \,,
\eqy
from which  we define a bracket on functionals by
\bqy
(F,K; G,N) &=&  {\int \! d^6{z}\! \int  \!d^6{z'} \,}  \calg(z,z') 
\label{KN11-2}\\
&&\times (\Sigma \KN M)(F_f, K_f, G_f, N_f)(z,z') \,.
\nonumber
\eqy
This form of 4-bracket can be generalized to higher dimensions of both 
the dependent and  independent variables.  

We find the metriplectic 4-bracket for the LLB collision  operator  has the following simple symmetric form:
\bqy
(F,K; G, N) &=&  {\int \! d^6{z}\! \int  \!d^6{z'} \,}  \calg(z,z') 
\label{LLB4bkt}\\
&& \hspace{-1cm} \times \   (\delta \KN \delta)_{ijkl}\, P\left[{F_f}\right]_i \! P\left[{K_f}\right]_j  P\left[{G_f}\right]_k 
 P\left[{N_f} \right]_l\,,
 \nonumber
\eqy
where
\bqy
(\delta \KN \delta)_{ijkl}&=& \de_{ik}\de_{jl} - \de_{il}\de_{jk}
+ \de_{jl}\de_{ik} - \de_{jk}\de_{il}
\nonumber\\
&=& 2(\de_{ik}\de_{jl} - \de_{il}\de_{jk}) 
\eqy
and 
\bq
 \calg(z,z')=   \frac{\delta(\mathbf{x} - \mathbf{x'})f(z)f(z')}{4|\mathbf{g}|^3}\,.
 \label{calg}
 \eq
Inserting the $H$ of \eqref{VPham} into the bracket of  \eqref{LLB4bkt}, we find $(F,G)_H=(F,H;G,H)$ is that given by \eqref{L2bkt}.

In general any metriplectic 2-bracket of the form of  \eqref{L2bkt}, with any $T$, using a   metric  $g$ and a never vanishing Hamiltonian $H$, we can always define a parent metriplectic 4-bracket using the formula
\[
(F,K;G,N) = \int \frac{1}{g(\mathbf{d}H,\mathbf{d}H)}(T\KN g)(\mathbf{d}F, \mathbf{d}K,\mathbf{d}G,\mathbf{d}N)
\]
that satisfies the  relation   
\[
(F,G)_H = (F, H; G, H)
\]
For the case of Landau, this bracket is given by 
\bqy
(F,K;G,N) &=& \int \frac{1}{|\mathbf{g}|^2} d\mathbf{z} \; d\mathbf{z'} \left[T \KN \delta\right]_{ijkl}
\\
& \times&P\left[\frac{\delta F}{\delta f}\right]_i \! \!P\left[\frac{\delta K}{\delta f}\right]_j \!\! P\left[\frac{\delta G}{\delta f}\right]_k  \!\! P\left[\frac{\delta N}{\delta f}\right]_l .
\nonumber
\eqy

One of the advantages of the 4-bracket formalism is it allows various forms of dissipation to be effortlessly created and interchanged.  For example, suppose we replace \eqref{calg} by
\bq
 \calg_M(z,z')=   \frac{\delta(\mathbf{x} - \mathbf{x'})M\big(f(z)\big)M\big(f(z')\big)}{4|\mathbf{g}|^3}
 \label{calgM}
 \eq
where $M$ is an arbitrary function of $f$. The  metriplectic 4-bracket thus defined, with this  kernel,  the Hamiltonian \eqref{VPham}, and entropy given by
\bq
S[f]=\int\!d^6z\, s(f)\,,
\eq
can be designed to relax to a desired stable equilibrium.  If we choose $Ms''=1$, then $(F,G)_H=(F,H;G,H)$ is the metriplectic 2-bracket of \cite{pjm86}, that yields a gradient flow that relaxes to the state determined  by
\bq
H_f= -s'(f)
\eq
A rigorous Lyapunov stability argument would require $s'$  monotonic  and suitable convexity of $s$.  

A special  case  of the above construction   {occurs for the choice}
\bqy
M(f)&=&f(1-f)
\\
s(f) &=& \big(f\ln f + (1-f)\ln(1-f)\big)\,.
\eqy
The metriplectic 2-bracket $(F,G)_H=(F,H;G,H)$ was shown i in  \cite{pjm86} to produce a collision operator given in 
\cite{kadomtsev}, that was designed to relax to a Fermi-Dirac-like equilibrium state proposed in \cite{lynden-bell}. 
 
As a final example of this subsection, we show how to  covert the metriplectic theory for the LLP collision operator into  the bracket theory K-M theory of \cite{pjmK82} discussed in Sec.\ \ref{sssec:KM}.  This is a theory generated by the Hamiltonian with an antisymmetric bracket.  Here we suppose  $S = \int f \log(f)$ and obtain
\bqy
[F,K]_S &=&  (F,K; S, H) 
\nonumber\\
&=&   {\int \! d^6{z}\! \int  \!d^6{z'} \,}  \frac{f(z)f(z')\delta(x-x')}{2|\mathbf{g}|^3}
\nonumber\\
&&\hspace{-.9cm}  \times \   \left(P\left[\frac{\delta F}{\delta f}\right] \times P\left[\frac{\delta S}{\delta f}\right]\right)\cdot  \left(P\left[\frac{\delta K}{\delta f}\right] \times P\left[\frac{\delta H}{\delta f}\right]\right) \nonumber \\ 
&=&    {\int \! d^6{z}\! \int  \!d^6{z'} \,}  \frac{\delta(x-x')}{2|\mathbf{g}|^3} 
\nonumber\\
&&\  \times\ \bigg( P\left[\frac{\delta F}{\delta f}\right] \times \Big(f(z') \nabla_{v}f(z)
\nonumber\\
&& \hspace{1.85cm}  -\,  f(z) \nabla_{v'}f(z')\Big)\cdot  P\left[\frac{\delta K}{\delta f}\right] \times \mathbf{g}\bigg)\nonumber\,.
\eqy

 \subsubsection{Symmetrizing  and linearizing GENERIC for Boltzmann}
 \label{sssec:GBoltz}
 
 {As a final example}  we  show how to symmetrize and linearize a bracket  {given} by Grmela  in \cite{grm84} for the Boltzmann collision operator.   Thus  showing  how this system  is a metriplectic system.  Then, we show how it can be obtained from a metriplectic 4-bracket. 
 
Grmela's  bracket is
\bqy
&&(A,S)^{Gr} \! = \frac{1}{4} \! \int\!  \!d^6 z_2' \! \int\!  \!d^6 z_1' \! \int\! \! d^6 z_2 \! \int\! \!d^6 z_1\,  W(z_1', z_2', z_1, z_2)
\nonumber\\
&&\qquad \times\    \left [A_f(z_1) + A_f (z_2) - A_f(z_1') - A_f(z_2') \right ] 
\label{grmela}\\
&&\qquad \times\  \left[ \exp\left( S_f (z_1') + S_f(z_2') \right) - \exp\left( S_f (z_1) + S_f(z_2)  \right) \right]\,,
\nonumber
\eqy
where $f(z)$ is  again the phase space  density and $W$ is an integral  kernel with the following symmetries: 
\bqy
W(z_1', z_2', z_1, z_2) &=& W(z_1, z_2, z_1', z_2')
 \nonumber\\
 &=& W(z_2, z_1, z_1', z_2')  \nonumber\\
 &=& W(z_1, z_2, z_2', z_1'). 
\eqy
Furthermore, $W(z_1, z_2, z_1', z_2')$ is assumed to vanish  unless the following conditions are met:
\begin{itemize}
    \item[(i)] $\mathbf{v}_1^2 + \mathbf{v}_2^2 = \mathbf{v}_1^{'2}  + \mathbf{v}_2 ^{2'}$
    \item[(ii)]   $\mathbf{v}_1 + \mathbf{v}_2 = \mathbf{v}_1  + \mathbf{v}_2 $
    \item[(iii)]  $\mathbf{x_1} = \mathbf{x_2} = \mathbf{x_1}' = \mathbf{x_2}' $\,.
\end{itemize}
Dynamics generated by the entropy functional, 
\bq
S[f] = \int\! d^6 \!z \, f \log f
\label{Bentropy}
\eq
in the bracket of \eqref{grmela}  {gives} according to  $(f,S)^{Gr}$  the Boltzmann equation.

While this bracket works for obtaining the correct equations of motion, it is not bilinear nor symmetric. To rectify both of these problems we define a new bracket 
\bqy
(F,S)&=& \frac{1}{2} \!
\int \!\!d^Nz_1 \!\int \!\!d^Nz_2 \!\int \!\!d^Nz_1' \!\int \! \!d^Nz_2' \,\calw(z_1, z_2 , z_1', z_2')
\nonumber\\
&\times& 
\left[  F_f(z_2') + F_f(z_1') - F_f(z_2) - F_f(z_1) \right] 
\nonumber\\
&& \times  \left [ S_f(z_2') +S_f(z_1') - S_f(z_2) -S_f(z_1) \right]
\label{MBoltz} 
\eqy
with a new kernel, 
\bq
\calw:= \frac{W(z_1, z_2 , z_1', z_2')}{\log(\frac{f(z_2')f(z_1')}{f(z_2)f(z_1)})}
 \; f(z_1) f(z_2)\,.
\eq
Notice that the bracket   of  \eqref{MBoltz} has the metriplectic properties of  being  bilinear, symmetric, and that it recovers the Boltzmann equation when $S$ is the entropy of \eqref{Bentropy}. Furthermore, it is appropriately degenerate, i.e., because of the  properties of $W$, it  follows that 
\[
(F,H) = 0 \,,
\]
where the Hamiltonian $H$ satisfies 
\[
H_f(z) = \frac{1}{2} \mathbf{v}^2 + V(\mathbf{x}). 
\]
This allows the Boltzmann-Vlasov kinetic equation to be put  {into} metriplectic form, with the dynamics generated by the free energy.

If we define the symmetric maps
\bqy
P[F_f, G_f] &=& \Big(F_f(z_2') + F_f(z_1')
\nonumber\\
&&\hspace{2cm}  - F_f(z_2) - F_f(z_1)\Big)
\nonumber\\
&\times & \Big( G_f(z_2') + G_f(z_1')
\\
&&\hspace{2cm}  - G_f(z_2) - G_f(z_1))
\nonumber
\eqy
and 
\[
G[F_f, G_f] = F_f(z_1) G_f(z_1) + F_f(z_2) G_f(z_2).
\]
and also the {regulated} integral kernel by 
\bqy
&&U(z_1, z_2, z_1, z_1' ,z _2') = \frac{W(z_1, z_2 , z_1', z_2')}{\log(\frac{f(z_2')f(z_1')}{f(z_2)f(z_1) } )+i0   }
\\
&& \times \left( \frac{1}{\frac{1}{2}\mathbf{v}_1^2 + V(\mathbf{x}_1) +\frac{1}{2}\mathbf{v}_2^2 + V(\mathbf{x}_2) +i0 } \right)^2  \! f(z_1) f(z_2)\,.
 \nonumber\eqy
We can define the corresponding $4$ bracket  {as}
\bqy
&&(F,K;G,N) = \frac{1}{2}  \! \int \! d^Nz_1 \!  \int \! d^Nz_2 \!  \int \!  d^Nz_1'  \!  \int \!  d^Nz_2'  
\nonumber\\
&&\qquad \times\  U(z_1, z_2, z_1, z_1' ,z _2') (P \KN I)(F_f,K_f;G_f,N_f) \,,
\nonumber
\eqy
{which gives the desired  result.}
 
 
\section{Conclusion}
\label{sec:conclusion}

The main contribution of this work is the idea of endowing a manifold, finite or infinite, with the metriplectic 4-bracket structure, a bracket like the Poisson bracket  but with slots for four functions and properties motivated by those of curvature tensors.  Dynamics, flows on the manifold, are generated by two phase space functions, a Hamiltonian/energy $H$ and a  {Casimir/entropy} $S$, in such a way as to conserve energy and produce entropy.  The formalism naturally mates noncanonical  Hamiltonian dynamics, whose set of Casimirs includes candidate entropies, with dissipative dynamics generated by the metriplectic 4-bracket. The formalism encompasses previous dissipative bracket formalisms as special cases, and has rich geometrical structure; in fact, there exists much structure that was not covered in the present paper that will be treated in a future work. 

Many avenues  for further develpment are apparent. For example, given a Lie-Poisson bracket, there are a  variety of theories  {based} on  Lie-algebra extensions, the original paper \cite{pjm80} and the nondissipative fluid model of Sec.\ \ref{sssec:3+1} being examples.   Many other  magnetofluid models for plasma dynamics follow this framework (see e.g.\ \cite{pjmT00}).  A thorough geometric analysis and classification in the  metriplectic 4-bracket framework remains to be done.  As is well-known, noncanonical Hamiltonian dynamics arises via reduction; e.g.,  for fluids this is embodied  in the mapping from Lagrangian to Eulerian variables, with Lagrangian variables having standard canonical form and Eulerian being Lie-Poisson.  Thus the question arises of what happens on the un-reduced level as metriplectic dynamics transpires.  This was investigated  in  \cite{pjmM18} and \cite{pjmC20}, but a thorough understanding of  how  metriplectic 4-bracket dynamics relates to un-reduced dynamics deserves attention.  Lastly, we mention that  a more complete understanding  of symmetry and  conservation  in the metriplectic 4-bracket context would be helpful;  e.g., in previous work \cite{pjm84,pjmBR13}  this was done by considering multilinear brackets of various types in order to maintain Casimir or other dynamical invariants. 

In closing we suggest two practical uses for the metriplectic 4-bracket formalism:  as an aide  or framework for model building and as a kind of structure to  {be preserved} for computation. 
 
Fundamental Hamiltonian theories, e.g., with microscopic interactions involving many degrees of freedom, tend to be difficult to analyze and extract predictions.  Consequently,  one resorts to model building.  Sometimes models are obtained by identifying small parameters and performing rigorous asymptotics from the fundamental theory, resulting in  reduced systems that contain both Hamiltonian and dissipative parts. Good asymptotics will  lead to systems that respect the laws of energy conservation and entropy production. Alternatively, often models are based on phenomenology, using some known or believed properties, constraints, etc.\  in order to produce a model with desired behavior.  In the course of such an endeavor, one should obtain a model with clearly identifiable dissipative and nondissipative parts. Upon setting the nondissipative parts to zero, the remaining part should be Hamiltonian with a conserved Hamiltonian having a clear physical  interpretation as energy.  Similarly, the complete system should respect the law of  entropy production in addition to energy conservation.  Although,  sometimes the amount of heat produced may be so small so as to neglect energy conservation on large scales, as is  the  case for  turbulence studies with the Navier-Stokes equations. However, such a model should come from a more complete model including  entropy dynamics  like that given in  Sec.\ \ref{sssec:3+1}.  So, our claim is that the metriplectic 4-bracket formalism serves as a kind of paradigm, akin to role the Hamiltonian  {or Lagrangian formalisms have} played for obtaining fundamental theories.  It provides a convenient framework for building models with good thermodynamic properties. The K-N product of  Sec.\ \ref{ssec:KN}, although not the only tool available, can be useful in this regard. 

 Lastly, we suggest that the metriplectic 4-bracket  formulation provides a new avenue for structure preserving numerics (see e.g.\ \cite{pjm17} for overview).  Just as symplectic integrators (see e.g.\ \cite{HLW06} ) preserve Hamiltonian form by time stepping with canonical transformations, Poisson integrators do the same while preserving Casimir leaves (e.g.\  \cite{pjmKKS17,pjmJO22}),  and  various dissipative  brackets have been used and proposed for a variety of numerical schemes.  For example, the original goal of the double bracket of \cite{vallis,carn90,shep90} and the improvements in \cite{pjmF11} were to calculate vortex states, while  additional  calculations of fluid and magnetofluid stationary states  were given in  \cite{pjmF17,pjmFWI18,pjmFS19,pjmF22}.  Already, the metriplectic 2-bracket formulation has been used or proposed for computation \cite{pjmBKM18,kraus,cb22,pjmBKM23}, while  some exploratory metriplectic  4-bracket computations have been done in  \cite{updike}.

\section*{Acknowledgment}
\noindent   PJM was supported by U.S. Dept.\ of Energy Contract \# DE-FG05-80ET-53088 and a Humboldt Foundation Research Award.  He would like to thank  {Naoki Sato and Azeddine Zaidni  for proofreading and commenting on  an early draft of this work} and also   Michal Pavelka and Miroslav Grmela for explaining to him the latest version of  {what they mean by} GENERIC.

\bibliographystyle{unsrt}






\end{document}